\newcommand{\Vsf}{\mathbf{T}^{\le 5}} 
\newcommand{\Wsf}{\mathbf{T}^{\ge 6}}
\renewcommand{\le}{\leqslant}
\renewcommand{\ge}{\geqslant}
\newcommand{\eps}{\varepsilon}
\newcommand{\emp}{\emptyset}
\newcommand{\Sig}{\Sigma}
\newcommand{\noin}{\noindent}
\newcommand{\bi}{\begin{itemize}}
\newcommand{\ei}{\end{itemize}}
\newcommand{\be}{\begin{enumerate}}
\newcommand{\ee}{\end{enumerate}}
\newcommand{\bd}{\begin{description}}
\newcommand{\ed}{\end{description}}
\newcommand{\bq}{\begin{quote}}
\newcommand{\eq}{\end{quote}}
\newcommand{\cD}{{\mathcal D}}
\newcommand{\cE}{{\mathcal E}}
\newcommand{\cN}{{\mathcal N}}
\newcommand{\cT}{{\mathcal T}}
\newcommand{\lraL}{{\mathbin{\approx_L}}}
\title{Complexity of Left-Ideal, Suffix-Closed and Suffix-Free Regular Languages\thanks{This work was supported by the Natural Sciences and Engineering Research Council of Canada 
grant No.~OGP0000871.}
}
\author{Janusz~Brzozowski and Corwin Sinnamon}
\authorrunning{J. Brzozowski,  C. Sinnamon}  
\titlerunning{Complexity of Left-Ideal, Suffix-Closed and Suffix-Free Languages}
\institute{David R. Cheriton School of Computer Science, University of Waterloo, \\
Waterloo, ON, Canada N2L 3G1\\
{\tt \{brzozo@uwaterloo.ca, sinncore@gmail.com}
}
\begin{document}
\maketitle

\begin{abstract}
A language $L$ over an alphabet $\Sig$ is suffix-convex if, for any words $x,y,z\in\Sig^*$, whenever $z$ and $xyz$ are in $L$, then so is $yz$.
Suffix-convex languages include three special cases: left-ideal, suffix-closed, and suffix-free languages.
We examine complexity properties of these three special classes of suffix-convex regular languages.
In particular, we study the quotient/state complexity of boolean operations, product (concatenation), star, and reversal on these languages,  as well as the size of their syntactic semigroups, and the quotient complexity of their atoms.
\medskip

\noin
{\bf Keywords:}
different alphabets, left ideal, most complex, quotient/state complexity, regular language, suffix-closed, suffix-convex, suffix-free, syntactic semigroup, transition semigroup, unrestricted complexity
\end{abstract}

\section{Introduction} 
\noin
{\bf Suffix-Convex Languages}
Convex languages were introduced in 1973~\cite{Thi73}, and revisited in 2009~\cite{AnBr09}.
For $w,x,y \in \Sig^*$, if $w=xy$, then $y$ is a \emph{suffix} of $w$.
A~language $L$ is \emph{suffix-convex} if, whenever $z$ and $xyz$ are in $L$,  then $yz$ is also in $L$, for all $x,y,z \in \Sig^*$.
Suffix-convex languages include three well-known subclasses: left-ideal, suffix-closed, and suffix-free languages.
A language $L$ is a \emph{left ideal} if it is non-empty and satisfies the equation $L=\Sig^*L$.
Left ideals play a role in pattern matching: If one is searching for all words ending with words in some language $L$ in a given text (a word over $\Sig^*$), then one is looking for words in $\Sig^*L$. 
Left ideals also constitute a basic concept in semigroup theory.
A~language $L$ is \emph{suffix-closed} if, whenever $w$ is in $L$ and $x$ is a suffix of $w$,  then $x$ is also in $L$, for all $w,x\in \Sig^*$.
The complement of every suffix-closed language not equal to $\Sig^*$ is a left ideal.
A language is \emph{suffix-free} if no word in the language is a suffix of another word in the language.
Suffix-free languages (with the exception of $\{\eps\}$, where $\eps$ is the empty word) 
are suffix codes.
They 
have many applications, and have been studied extensively; see~\cite{BPR10} for example. 
\smallskip

\noin
{\bf Quotient/State Complexity}
If $\Sig$ is an alphabet and $L\subseteq\Sig^*$ is a language such that every letter of $\Sig$ appears in some word of $L$, then the \emph{(left) quotient} of $L$ by a word $w\in\Sig^*$ is $w^{-1}L=\{x\mid wx\in L\}$. A language is regular if and only if it has a finite number of distinct quotients. 
So the number of quotients of $L$, the 
\emph{quotient complexity} $\kappa(L)$~\cite{Brz10} of $L$, is a natural measure of complexity for $L$.
A concept equivalent to quotient complexity is the \emph{state complexity}~\cite{Yu01} of $L$, which is the number of states in a complete minimal deterministic finite automaton (DFA)  with alphabet $\Sig$ recognizing $L$.
We refer to quotient/state complexity simply as \emph{complexity}.

If $L_n$ is a regular language of complexity $n$, and $\circ$ is a unary operation, then 
the \emph{complexity  of $\circ$} 
is  the maximal value of $\kappa(L_n^\circ)$, expressed as a function of $n$,
as $L_n$ ranges over all regular languages of complexity $n$.
Similarly, if $L'_m$ and $L_n$ are regular languages of  complexities $m$ and $n$ respectively, $\circ$ is a binary operation, then
the \emph{complexity of $\circ$} 
is  the maximal value of $\kappa(L'_m \circ L_n)$, expressed as a function 
 of $m$ and $n$,
as $L'_m$ and $L_n$ range over all regular languages of complexities $m$ and $n$, respectively.
The complexity of an operation is a lower bound on its time and space complexities, and has been studied extensively; see~\cite{Brz10,Brz13,HoKu11,Yu01}.

In the past the complexity of a binary operation was studied under the assumption that the arguments of the operation are \emph{restricted} to be over the same alphabet, but this restriction was removed in~\cite{Brz16}. 
We study both the restricted and unrestricted cases.

\noin
{\bf Witnesses} To find the complexity of a unary operation we find an upper bound on this complexity, and  languages that meet this bound. 
We require a language $L_n$ for each $n\ge k$, that is, a sequence $(L_k, L_{k+1}, \dots)$, where $k$ is a small integer, because the bound may not hold for small values of $n$. 
Such a sequence is a \emph{stream} of languages. 
For a binary operation we require two streams. 
Sometimes the same stream can be used for both operands;
in general, however, this is not the case. 
For example, the bound for union is $mn$, and it cannot be met by languages from one stream if $m=n$ because $L_n \cup L_n=L_n$ and the complexity is $n$ instead of $n^2$.

\noin
{\bf Dialects}
For all common binary operations on regular languages the second stream can be a ``dialect'' of the first, that is it can ``differ only slightly'' from the first, and all the bounds can still be met~\cite{Brz13}. 
Let $\Sigma=\{a_1,\dots,a_k\}$ be an alphabet ordered as shown;
if $L\subseteq \Sigma^*$, we denote it by $L(a_1,\dots,a_k)$ to stress its dependence on $\Sigma$.
A \emph{dialect}  of $L$ is obtained by  deleting letters of $\Sigma$ in the words of $L$, or replacing them by letters of another alphabet $\Sigma'$.
More precisely, for  a partial  injective map $\pi \colon \Sigma \mapsto \Sigma'$,
we obtain a dialect of $L$ by replacing each letter $a \in \Sigma$ by $\pi(a)$ in every word of $L$,
or deleting the word entirely if $\pi(a)$ is undefined.
We write $L(\pi(a_1),\dots, \pi(a_k))$ to denote the dialect of $L(a_1,\dots,a_k)$ given by $\pi$,
and we denote undefined values of $\pi$ by  ``$-$''.
For example, if $L(a,b,c)= \{a, ab, ac\}$, then  $L(b,-,d)$ is the language $\{b,bd\}$.
Undefined values at the end of the alphabet are omitted. 
A similar definition applies to DFAs. 
Our definition of dialect is more general than that of~\cite{BDL15,BrSi16}, where only the case $\Sig'=\Sig$ was allowed.

\noin
{\bf Most Complex Streams} 
It was proved 
that there exists a stream $(L_3, L_4, \dots)$ of regular languages 
which together with some dialects
meets all the complexity bounds for reversal, (Kleene) star, product (concatenation), and all binary boolean operations~\cite{Brz13,Brz16}. 
Moreover, this stream meets two additional complexity bounds: the size of the syntactic semigroup, and the complexities of atoms (discussed later). 
A stream of deterministic finite automata (DFAs) corresponding to a most complex language stream is  a \emph{most complex DFA stream}.
In defining a most complex stream we try to minimize the size of the union of the alphabets of the dialects required to meet all the bounds.

Most complex streams are useful in the designs of systems dealing with regular languages and finite automata. 
To know the maximal sizes of automata that can be handled by the system it suffices to use the most complex stream to test all the operations.

It is known that there is a most complex stream of left ideals that meets all the bounds in both the restricted~\cite{BDL15,BrSi16} and unrestricted~\cite{BrSi16} cases, but a most complex suffix-free stream does not exist~\cite{BrSz15a}.
\medskip

\noin
{\bf Our Contributions}
\be
\item
We derive a new left-ideal stream from the most complex left-ideal stream and show that it meets all the complexity bounds except that for product. 
\item
We prove that the complement of the new left-ideal stream is a most complex  suffix-closed stream.
\item
We find a new suffix-free stream that meets the bounds for star, product and boolean operations; it has simpler transformations than the  known stream.
\item
Our witnesses for left-ideal, suffix-closed, and suffix-free streams are all derived from one most complex regular stream.
\ee

\section{Background}
\noin
{\bf Finite Automata}
A \emph{deterministic finite automaton (DFA)} is a quintuple
$\cD=(Q, \Sigma, \delta, q_0,F)$, where
$Q$ is a finite non-empty set of \emph{states},
$\Sig$ is a finite non-empty \emph{alphabet},
$\delta\colon Q\times \Sig\to Q$ is the \emph{transition function},
$q_0\in Q$ is the \emph{initial} state, and
$F\subseteq Q$ is the set of \emph{final} states.
We extend $\delta$ to a function $\delta\colon Q\times \Sig^*\to Q$ as usual.
A~DFA $\cD$ \emph{accepts} a word $w \in \Sigma^*$ if ${\delta}(q_0,w)\in F$. The language accepted by $\cD$ is denoted by $L(\cD)$. If $q$ is a state of $\cD$, then the language $L^q$ of $q$ is the language accepted by the DFA $(Q,\Sigma,\delta,q,F)$. 
A state is \emph{empty} if its language is empty. Two states $p$ and $q$ of $\cD$ are \emph{equivalent} if $L^p = L^q$. 
A state $q$ is \emph{reachable} if there exists $w\in\Sig^*$ such that $\delta(q_0,w)=q$.
A DFA is \emph{minimal} if all of its states are reachable and no two states are equivalent.
Usually DFAs are used to establish upper bounds on the complexity of operations and also as witnesses that meet these bounds.

A \emph{nondeterministic finite automaton (NFA)} is a quintuple
$\cD=(Q, \Sigma, \delta, I,F)$, where
$Q$,
$\Sig$ and $F$ are defined as in a DFA, 
$\delta\colon Q\times \Sig\to 2^Q$ is the \emph{transition function}, and
$I\subseteq Q$ is the \emph{set of initial states}. 
An \emph{$\eps$-NFA} is an NFA in which transitions under the empty word $\eps$ are also permitted.
\smallskip

\noin
{\bf Transformations}
We use $Q_n=\{0,\dots,n-1\}$ as our basic set with $n$ elements.
A \emph{transformation} of $Q_n$ is a mapping $t\colon Q_n\to Q_n$.
The \emph{image} of $q\in Q_n$ under $t$ is denoted by $qt$.
If $s$ and $t$ are transformations of $Q_n$, their composition is  denoted $(qs)t$ when applied to $q \in Q_n$.
Let $\cT_{Q_n}$ be the set of all $n^n$ transformations of $Q_n$; then $\cT_{Q_n}$ is a monoid under composition. 

For $k\ge 2$, a transformation (permutation) $t$ of a set $P=\{q_0,q_1,\ldots,q_{k-1}\} \subseteq Q$ is a \emph{$k$-cycle}
if $q_0t=q_1, q_1t=q_2,\ldots,q_{k-2}t=q_{k-1},q_{k-1}t=q_0$.
This $k$-cycle is denoted by $(q_0,q_1,\ldots,q_{k-1})$.
A~2-cycle $(q_0,q_1)$ is called a \emph{transposition}.
A transformation  that sends all the states of $P$ to $q$ and acts as the identity on the remaining states is denoted by $(P \to q)$ the transformation 
$(Q_n\to p)$ is called  \emph{constant}.  If $P=\{p\}$ we write  $( p\to q)$ for $(\{p\} \to q)$.
 The identity transformation is denoted by $\mathbbm 1$.
 The notation $(_i^j \; q\to q+1)$ denotes a transformation that sends $q$ to $q+1$ for $i\le q\le j$ and is the identity for the remaining states. the notation $(_i^j \; q\to q-1)$ is defined similarly.
\smallskip

\noin
{\bf Semigroups}
The \emph{Myhill congruence} $\lraL$~\cite{Myh57} (also known as the \emph{syntactic congruence})  of a language $L\subseteq \Sig^*$ is defined on $\Sig^+$ as follows:
For $x, y \in \Sig^+,  x \lraL y $  if and only if  $wxz\in L  \Leftrightarrow wyz\in L$ for all  $w,z \in\Sig^*.
$
The quotient set $\Sig^+/ \lraL$ of equivalence classes of  $\lraL$ is a semigroup, the \emph{syntactic semigroup} $T_L$ of $L$.

Let $\cD = (Q_{n}, \Sig, \delta,  0, F)$ be a DFA. For each word $w \in \Sig^*$, the transition function induces a transformation $\delta_w$ of $Q_n$ by  $w$: for all $q \in Q_n$, 
$q\delta_w = \delta(q, w).$ 
The set $T_{\cD}$ of all such transformations by non-empty words is the \emph{transition semigroup} of $\cD$ under composition~\cite{Pin97}. Sometimes we use the word $w$ to denote the transformation it induces; thus we write $qw$ instead of $q\delta_w$.
We extend the notation to sets: if $P\subseteq Q_n$, then $Pw=\{pw\mid p\in P\}$.
We also find  write $P\stackrel{w}{\longrightarrow} Pw$ to indicate that the image of $P$ under $w$ is $Pw$.

If  $\cD$ is a minimal DFA of $L$, then $T_{\cD}$ is isomorphic to the syntactic semigroup $T_L$ of $L$~\cite{Pin97}, and we represent elements of $T_L$ by transformations in~$T_{\cD}$. 
The size of this semigroup has been used as a measure of complexity~\cite{Brz13,BrYe11,HoKo04,KLS05}.
\smallskip

\noin
{\bf Atoms}
Atoms are defined by a left congruence, where two words $x$ and $y$ are equivalent if 
 $ux\in L$ if and only if  $uy\in L$ for all $u\in \Sig^*$. 
 Thus $x$ and $y$ are equivalent if
 $x\in u^{-1}L$ if and only if $y\in u^{-1}L$.
 An equivalence class of this relation is an \emph{atom} of $L$~\cite{BrTa14}. 
Thus an atom is a non-empty intersection of complemented and uncomplemented quotients of $L$. 
The number of atoms and their  complexities were suggested as possible measures of complexity of regular languages~\cite{Brz13},
because all the quotients of a language, and also the quotients of atoms, are always unions of atoms
~\cite{BrTa13,BrTa14,Iva16}.
\smallskip

\noin
{\bf Our Key Witness}
\label{sec:regular}
The stream $(\cD_n(a,b,c) \mid n\ge 3)$  of Definition~\ref{def:regular} and Figure~\ref{fig:RegWit} was introduced in~\cite{Brz13}  and studied   further in~\cite{BrSi16a}. It will be used as a component in all the classes of languages examined in this paper.
It was shown in~\cite{Brz13,BrSi16a} that this stream together with some dialects is most complex.  

\begin{definition}
\label{def:regular}
For $n\ge 3$, let $\cD_n=\cD_n(a,b,c)=(Q_n,\Sig,\delta_n, 0, \{n-1\})$, where 
$\Sig=\{a,b,c\}$, 
and $\delta_n$ is defined by 
$a\colon (0,\dots,n-1)$,
$b\colon(0,1)$,
$c\colon(n-1 \rightarrow 0)$.
\end{definition}

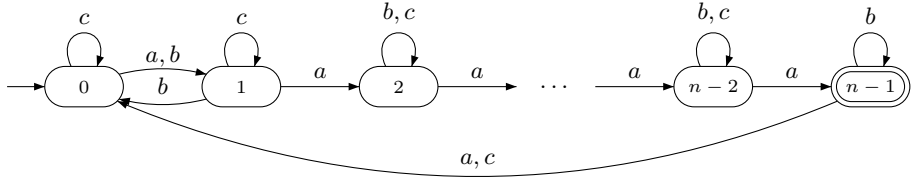
\begin{figure}[ht]
\unitlength 8.5pt
\begin{center}\begin{picture}(37,7)(0,3.5)
\gasset{Nh=1.8,Nw=3.5,Nmr=1.25,ELdist=0.4,loopdiam=1.5}
	{\scriptsize
\node(0)(1,7){0}\imark(0)
\node(1)(8,7){1}
\node(2)(15,7){2}
}
\node[Nframe=n](3dots)(22,7){$\dots$}
	{\scriptsize
\node(n-2)(29,7){$n-2$}
	}
	{\scriptsize
\node(n-1)(36,7){$n-1$}\rmark(n-1)
	}
\drawloop(0){$c$}
\drawedge[curvedepth= .8,ELdist=.1](0,1){$a,b$}
\drawedge[curvedepth= .8,ELdist=-1.2](1,0){$b$}
\drawedge(1,2){$a$}
\drawloop(2){$b,c$}
\drawedge(2,3dots){$a$}
\drawedge(3dots,n-2){$a$}
\drawloop(n-2){$b,c$}
\drawedge(n-2,n-1){$a$}
\drawedge[curvedepth= 4.0,ELdist=-1.0](n-1,0){$a,c$}
\drawloop(n-1){$b$}
\drawloop(1){$c$}
\end{picture}\end{center}
\caption{Minimal DFA of a most complex regular language.}
\label{fig:RegWit}
\end{figure}

\section{Left Ideals}

The following stream was  studied in~\cite{BrYe11} and also in~\cite{BrDa15,BDL15,BrSz14}.
This stream is most complex when the two alphabets are  the same in binary operations~\cite{BDL15}.
It is also most complex for unrestricted operations~\cite{BrSi16}.

\begin{definition}
\label{def:LWit}
For $n\ge 4$, let $\cD_n=\cD_n(a,b,c,d,e)=(Q_n,\Sig,\delta_n, 0, \{n-1\})$, where 
$\Sig=\{a,b,c,d,e\}$,
and $\delta_n$ is defined by  
transformations
$a\colon (1,\dots,n-1)$,
$b\colon(1,2)$,
${c\colon(n-1 \to 1)}$,
${d\colon(n-1\to 0)}$, 
and $e\colon (Q_n\to 1)$.
See Figure~\ref{fig:LWit}.
Let $L_n=L_n(a,b,c,d,e)$ be the language accepted by~$\cD_n$.
\end{definition}

\begin{figure}[h]
\unitlength 10pt
\begin{center}\begin{picture}(33,9)(-.5,3)
\gasset{Nh=2.4,Nw=2.7,Nmr=1.2,ELdist=0.3,loopdiam=1.2}
{\small
\node(0)(2,8){$0$}\imark(0)
\node(1)(7,8){$1$}
\node(2)(12,8){$2$}
\node(3)(17,8){$3$}
\node[Nframe=n](qdots)(22,8){$\dots$}

\node(n-2)(27,8){{\small $n-2$}}
\node(n-1)(31,8){{\small $n-1$}}\rmark(n-1)

\drawedge(0,1){$e$}
\drawedge[ELdist=.2](1,2){$a,b$}
\drawedge(2,3){$a$}
\drawedge(3,qdots){$a$}
\drawedge(qdots,n-2){$a$}
\drawedge(n-2,n-1){$a$}
\drawloop(0){$a,b,c,d$}
\drawloop(1){$c,d,e$}
\drawloop(2){$c,d$}
\drawloop(3){$b,c,d$}
\drawloop(n-2){$b,c,d$}
\drawloop(n-1){$b$}
\drawedge[curvedepth=-2.5](2,1){$b,e$}
\drawedge[curvedepth=-4.8](3,1){$e$}
\drawedge[curvedepth=2.1](n-2,1){$e$}
\drawedge[curvedepth=3.5](n-1,1){$a,c,e$}
\drawedge[curvedepth=5](n-1,0){$d$}
}
\end{picture}\end{center}
\caption{Minimal DFA 
of a most complex left ideal $L_n(a,b,c,d,e)$.
}
\label{fig:LWit}
\end{figure}
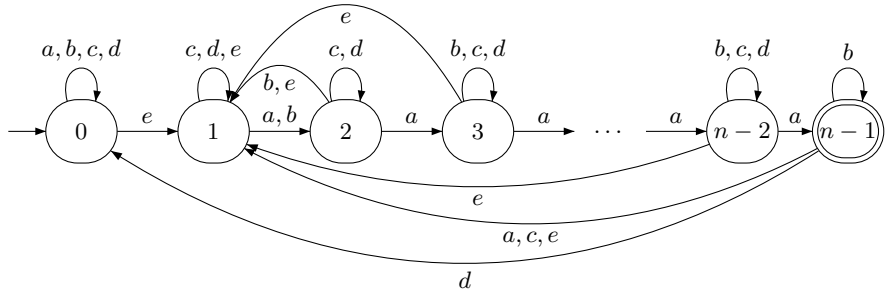

\begin{theorem}[Most Complex Left Ideals~\cite{BDL15,BrSi16}]
\label{thm:leftideals}
For each $n\ge 4$, the DFA of Definition~\ref{def:LWit} is minimal. 
The stream $(L_n(a,b,c,d,e) \mid n \ge 4)$  with some dialect streams
is most complex in the class of regular left ideals.
\be
\item
The syntactic semigroup of $L_n(a,b,c,d,e)$ has cardinality $n^{n-1}+n-1$.  
\item
Each quotient of $L_n(a,-,-,d,e)$ has complexity $n$.
\item
The reverse of $L_n(a,-,c,d,e)$ has complexity $2^{n-1}+1$, and $L_n(a,-,c,d,e)$ has $2^{n-1}+1$ atoms. 
\item
For each atom $A_S$ of $L_n(a,b,c,d,e)$, the complexity $\kappa(A_S)$ satisfies:
\begin{equation*}
	\kappa(A_S) =
	\begin{cases}
		 n, 			& \text{if $S=Q_n$;}\\
		2^{n-1},		& \text{if $S=\emp$;}\\
		1 + \sum_{x=1}^{|S|}\sum_{y=1}^{n-|S|}\binom{n-1}{x}\binom{n-x-1}{y-1},
		 			& \text{otherwise.}
		\end{cases}
\end{equation*}

\item
The star of $L_n(a,-,-,-,e)$ has complexity $n+1$.
\item
	\be
	\item
	Restricted product:
	$\kappa(L'_m(a,-,-,-,e) L_n(a,-,-,-,e)) = m+n-1$.
	\item
	Unrestricted product:
	$\kappa(L'_m(a,b,-,d,e) L_n(a,d,-,c,e)) = mn+m+n$.
	\ee
\item
	\be
	\item
	Restricted complexity:
	$\kappa(L'_m(a,-,c,-,e)\circ L_n(a,-,e,-,c)) = mn$.
	\item
	Unrestricted complexity:
	$\kappa(L'_m(a,b,-,d,e) \circ L_n(a,d,-,c,e) =
	(m+1)(n+1)$ if $\circ\in \{\cup,\oplus\})$, 
 $mn+m$ if $\circ=\setminus$, and $mn$ if $\circ=\cap$. 

	\ee
In both cases these bounds are the same as those for regular languages.

\ee
\end{theorem}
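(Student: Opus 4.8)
The proof is an assembly of standard complexity arguments, all built on the single minimal DFA $\cD_n$ of Figure~\ref{fig:LWit} and its dialects; the unifying technique is, for each operation or measure, to construct the appropriate automaton and then count its reachable and pairwise-distinguishable states, so that an exact count yields both the upper and the lower bound at once. First I would establish that $\cD_n$ is minimal: every state is reachable because $0 \der{e} 1$ and the $a$-cycle $(1,\dots,n-1)$ sweeps out $\{1,\dots,n-1\}$, and all states are distinguishable because $n-1$ is the unique final state while $0$ is frozen under every letter but $e$ (so a suitable power of $a$ routes one chosen state to $n-1$ and no other). The structural feature exploited throughout is the left-ideal identity $L_n=\Sig^*L_n$, reflected in the DFA by the fact that every letter except $e$ fixes state $0$, whereas $e$ collapses all of $Q_n$ onto $\{1\}$.

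\textbf{Semigroup and quotients (1, 2).} For part~1 I would show that $\{a,b,c,d\}$ generate exactly the $n^{n-1}$ transformations fixing $0$: the letters $a,b,c$ (an $(n-1)$-cycle, a transposition, and a merge) generate all transformations of $\{1,\dots,n-1\}$, and $d$, sending $n-1$ to $0$, lets any state be routed into $0$. Since $e$ already maps everything to $1$, any word containing $e$ induces the constant map onto $1v$, where $v$ is the suffix after its last occurrence of $e$; hence $e$ contributes precisely the $n-1$ constant maps onto $\{1,\dots,n-1\}$, the constant map onto $0$ being already counted. This gives exactly $n^{n-1}+n-1$, which by minimality is the syntactic semigroup. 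For part~2 I would observe that under the dialect $(a,-,-,d,e)$ the automaton is strongly connected ($e$ reaches $1$, the $a$-cycle reaches all of $\{1,\dots,n-1\}$, and $d$ reaches $0$), so every quotient is accepted by a minimal $n$-state DFA and has complexity $n$.

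\textbf{Reversal and atoms (3, 4).} For part~3 I would reverse $\cD_n$ under the dialect $(a,-,c,d,e)$, determinize, and verify that exactly $2^{n-1}+1$ subsets of $Q_n$ are reachable and pairwise distinguishable; since the reachable states of the determinized reverse are in bijection with the nonempty atoms, this simultaneously shows $\kappa(L_n^R)=2^{n-1}+1$ and that there are $2^{n-1}+1$ atoms. Part~4 is the most delicate. For each $S$ I would build the DFA recognizing $A_S=\bigcap_{q\in S}L^q\cap\bigcap_{q\notin S}\ol{L^q}$ as a subset automaton whose states are pairs $(X,Y)$ of disjoint sets tracking which quotients are currently required to be in, respectively out of, the atom, then count the reachable configurations by $x=|X|$ and $y=|Y|$. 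The boundary cases $S=Q_n$ and $S=\emp$ collapse to $n$ and $2^{n-1}$, while the generic count, after accounting for one distinguished entry state and a single sink, yields $1+\sum_{x=1}^{|S|}\sum_{y=1}^{n-|S|}\binom{n-1}{x}\binom{n-x-1}{y-1}$.

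\textbf{Operations and the main obstacle (5, 6, 7).} For the star (part~5), product (part~6), and boolean operations (part~7) I would use the standard $\eps$-NFA, concatenation, and direct-product constructions and count states, choosing for each bound the dialect that forces the operands to interact generically. The restricted product bound $m+n-1$ reflects that the left ideal $L_n$ absorbs any prefix once entered, so the subset-tracking collapses, while the unrestricted bounds $mn+m+n$ and the boolean values $(m+1)(n+1)$, $mn+m$, $mn$ arise from the extra empty states contributed by letters occurring in only one of the two alphabets; the exact counts confirm the class-specific upper bounds for star and reversal (which lie below the generic regular values) and the matching regular upper bounds for products and boolean operations. The main obstacle I anticipate is twofold: deriving the atom-complexity formula of part~4, whose state count requires an exhaustive analysis of which disjoint pairs $(X,Y)$ are reachable; and verifying that a \emph{single} base stream, specialized through the listed dialects, simultaneously meets every bound, which forces one to check case by case that the chosen dialect neither merges states nor depresses the complexity below its target value.
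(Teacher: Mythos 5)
You should know at the outset that the paper contains no proof of this theorem: it is imported background, attributed to \cite{BDL15,BrSi16}, and the only in-paper analogue is the proof of Theorem~\ref{thm:leftideals2} for the sibling stream $M_n$, which indeed proceeds exactly as you outline (explicit reachability words, pairwise distinguishability in subset and direct-product constructions, and citation of the known upper bounds). So your general plan is faithful to how these results are actually proved in the cited sources; in particular your semigroup analysis is correct and essentially the standard one: $a,b,c$ generate all transformations of $\{1,\dots,n-1\}$, $d$ extends this to the $n^{n-1}$ transformations fixing $0$, and every word containing $e$ induces a constant map, contributing exactly the $n-1$ constants onto $\{1,\dots,n-1\}$ not already counted. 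Your observation that strong connectivity of $\cD_n(a,-,-,d,e)$ plus minimality gives item 2, and that the determinized reverse witnesses both the reversal complexity and the atom count (via the atoms--reversal correspondence of Brzozowski and Tamm), also matches the source arguments.

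The genuine gap is your framing that ``an exact count yields both the upper and the lower bound at once.'' That is true only for the complexity of the witness itself; it cannot establish the theorem's claim that the stream is \emph{most complex in the class of left ideals}. The class-specific bounds --- $n+1$ for star, $2^{n-1}+1$ for reversal, $m+n-1$ for restricted product, $n^{n-1}+n-1$ for the syntactic semigroup, and the atom formula --- are upper bounds quantified over \emph{all} left ideals, and no amount of counting reachable and distinguishable states in constructions built from $\cD_n$ can deliver them. Your sentence ``the exact counts confirm the class-specific upper bounds for star and reversal'' inverts the logic: the counts can only \emph{match} bounds that must be proved separately, as they are in the literature the paper leans on (\cite{BJL13} for star, reversal, and product on ideals; \cite{BrDa15} for atoms; \cite{BrYe11,BrSz14} for the semigroup bound). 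Even in your strongest section, the semigroup argument, you prove that the witness's semigroup has size exactly $n^{n-1}+n-1$ but not that this is maximal among left ideals. A complete proof along your lines must either reproduce or cite these class-wide upper-bound arguments; without them, items 1, 3, 4, 5, and 6(a) establish only lower bounds. The remaining weakness is one of completeness rather than correctness: items 5--7 are described generically (``count states, choosing the dialect that forces generic interaction''), whereas the actual content of the source proofs is precisely the case-by-case reachability and distinguishability verification for each dialect and each boolean operation, of the kind the paper carries out in detail for $M_n$ in Theorem~\ref{thm:leftideals2}.
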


We now define a new left-ideal witness similar to the witness in Definition~\ref{def:LWit}.
\begin{definition}
\label{def:LNew}
For $n\ge 4$, let $\cE_n=\cE_n(a,b,c,d,e)=(Q_n,\Sig,\delta_n, 0, \{1,\dots,n-1\})$, where 
$\Sig$ and the transformations induced by its letters are as in $\cD_n$ of 
Definition~\ref{def:LWit}.
Let $M_n=M_n(a,b,c,d,e)$ be the language accepted by~$\cE_n$. 
\end{definition}

\begin{theorem}[Nearly Most Complex Left Ideals]
\label{thm:leftideals2}
For each $n\ge 4$, the DFA of Definition~\ref{def:LNew} is minimal and its 
language $M_n(a,b,c,d,e)$ is a left ideal of complexity $n$.
The stream $(M_n(a,b,c,d,e) \mid n \ge 4)$  with some dialect streams
meets all the complexity bounds for left ideals, except those for product.
\end{theorem}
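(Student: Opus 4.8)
The plan is to prove the four assertions (minimality, the left-ideal property, complexity $n$, and the list of matched bounds) in increasing order of difficulty, exploiting throughout that $\cE_n$ and the DFA $\cD_n$ of Definition~\ref{def:LWit} induce \emph{identical} transformations and differ only in their sets of final states. For the structural claims, reachability is immediate (reading $e$ from $0$ reaches $1$, and $a^k$ carries $1$ to $1+k$), and distinguishability follows because $\eps$ separates the unique non-final state $0$ from all others, while for $1\le p<q\le n-1$ the word $a^{\,n-1-p}d$ sends $p$ to $n-1$ and then to $0$ but keeps $q$ inside $\{1,\dots,n-1\}$; hence $\cE_n$ is minimal and $\kappa(M_n)=n$. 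For the left-ideal property I would isolate the key observation that the letter $e$ resets every computation: if $w$ contains an $e$ then $\delta(q,w)=\delta(1,v)$ for \emph{every} state $q$, where $v$ is the suffix of $w$ following its last $e$, so $\delta(0,w)=\delta(q,w)$; and if $w$ contains no $e$ then $\delta(0,w)=0$. Either way, $\delta(0,w)\ne 0$ forces $\delta(q,w)\ne 0$, which is exactly $L^0\subseteq L^q$ for every state, that is, $M_n=\Sig^*M_n$.

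The cleanest part of the argument disposes of the syntactic-semigroup, atom-count, and atom-complexity bounds at once. Since $\cE_n$ and $\cD_n$ are both minimal and carry the same transition function, their transition semigroups coincide, so bound~1 ($n^{n-1}+n-1$) is inherited verbatim from Theorem~\ref{thm:leftideals}. This semigroup consists of all transformations fixing $0$ together with the $n-1$ constants onto $1,\dots,n-1$; using the former I would show that the non-empty atoms $A_S$ of $M_n$ are exactly those with $S\subseteq\{1,\dots,n-1\}$ together with $S=Q_n$, which is the very index family arising for $L_n$. Because the number of distinct quotients $w^{-1}A_S=\bigcap_{p\in Sw}L^p\cap\bigcap_{p\in(Q_n\setminus S)w}\overline{L^p}$ is governed only by the reachable images $(Sw,(Q_n\setminus S)w)$ under the transition semigroup and by which atoms are non-empty, and both data agree with $L_n$, the atom complexities of bound~4 and the atom count $2^{n-1}+1$ of bound~3 transfer directly from Theorem~\ref{thm:leftideals}; the reversal complexity then follows from the equality of $\kappa(M_n^R)$ with the number of atoms.

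For the quotient bound (bound~2) I would work with the dialect $M_n(a,-,-,d,e)$: from any state, $e$ followed by powers of $a$ reaches all of $\{1,\dots,n-1\}$ and $a^kd$ reaches $0$, and the separating words above still apply, so every quotient has complexity $n$. For star (bound~5) I would use that every left ideal satisfies $L^*=L\cup\{\eps\}$, since $LL\subseteq\Sig^*L=L$; as $\eps\notin M_n$ and the word $a$ lies in every $L^i$ ($i\ge1$) but not in $L^0$, adjoining $\eps$ produces one genuinely new final initial state, distinct from all old ones, giving $\kappa(M_n^*)=n+1$.

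The boolean-operation bounds (bound~7) are the real labour and the expected obstacle. Following the template of Theorem~\ref{thm:leftideals}, I would form the direct-product automata of the appropriate dialects of $M'_m$ and $M_n$ and count reachable, pairwise-inequivalent states. Reachability of all $mn$ pairs (and of the extra sink row and column needed in the unrestricted setting for differing alphabets) transfers unchanged, as it depends only on the shared transition structure; the work is to re-establish distinguishability for the enlarged final-state sets $\{1,\dots,m-1\}$ and $\{1,\dots,n-1\}$, which I would handle by exhibiting words that route one chosen component into the absorbing state $0$. This should yield $mn$ in the restricted case and $(m+1)(n+1)$, $mn+m$, and $mn$ for $\cup/\oplus$, $\setminus$, and $\cap$ respectively in the unrestricted case. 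Finally, to justify the phrase \emph{except those for product}, I would compute $\kappa(M'_mM_n)$ and argue that the abundance of final states of $M'_m$ (all but $0$) triggers the crossover into $M_n$ far too early, so the concatenation complexity falls strictly below the left-ideal values $m+n-1$ and $mn+m+n$. Verifying this shortfall, together with the boolean distinguishability, is where essentially all of the effort will lie.
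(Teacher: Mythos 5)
Your structural arguments (minimality, complexity $n$, the left-ideal property via the ``last $e$ resets every state'' observation, and the semigroup bound via independence of the final-state set) match the paper's, which treats them the same way but more tersely. You genuinely diverge on atoms, reversal, and star, and your routes are correct. For atoms you argue by transfer: $w^{-1}A_S$ is a union of atoms determined solely by the pair $(Sw,\overline{S}w)$ realized in the (shared) transition semigroup together with the family of non-empty atoms, and that family --- all $S\subseteq\{1,\dots,n-1\}$ plus $S=Q_n$ --- is the same for $M_n$ as for $L_n$ (for $F=\{1,\dots,n-1\}$, non-emptiness of $A_S$ means $\overline{S}$ is the preimage of $0$ under some semigroup element, and the semigroup of all $0$-fixing maps plus the nonzero constants realizes exactly the sets $\overline{S}\ni 0$ and $\overline{S}=\emptyset$); this yields the atom complexities and the count $2^{n-1}+1$ at once, and reversal then follows from $\kappa(M_n^R)=\#\text{atoms}$~\cite{BrTa14}. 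The paper goes the opposite way: it proves reversal directly by a subset construction on the reversed NFA of $M_n(a,-,-,d,e)$, cites \cite{BrTa14} for the atom count, and cites \cite{BrDa15} for the atom complexities, asserting that proof applies to $M_n$. Your direction is more self-contained but leans on correctly identifying the semigroup and the non-empty-atom family, which you should spell out. For star, your identity $L^*=L\cup\{\eps\}$ for left ideals, with $a$ separating $M_n\cup\{\eps\}$ from every $L^q$, $q\ge 1$, is cleaner than the paper's NFA construction and makes the cited upper bound $n+1$ unnecessary.

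The real shortfall is the boolean-operations item, which is where the paper spends nearly all of its effort: you state the correct plan (direct product, reachability inherited from the transition structure, fresh distinguishability analysis for the enlarged final sets, plus the empty row and column in the unrestricted case) but execute none of it. Two points would need settling. First, ``reachability transfers unchanged'' holds only after fixing a dialect pairing; the paper uses $M'_m(a,b,-,d,e)$ against $M_n(a,e,-,d,b)$ --- a pairing \emph{different} from the boolean witnesses in Theorem~\ref{thm:leftideals} --- and proves reachability of all $mn$ states from scratch with explicit words such as $ea^{p-1}$, $b^2a^{q-1}$, $ba^{q-1}$, $e^2a^{p-1}$. Second, the deferred content is precisely the four-operation case analysis, restricted and unrestricted, including the deletion of the equivalent empty states to obtain $mn+m$ for $\setminus$ and $mn$ for $\cap$; without it the theorem's main claim is unproved. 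Finally, your plan to show the product bounds are \emph{strictly} missed is work the paper never does --- its proof simply marks product ``not applicable'' --- so you can drop that burden unless you insist on the stronger reading of ``except those for product.''
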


\begin{proof}
It is easily verified that $\cE_n(a,-,-,d,e)$ is minimal; hence $M_n(a,b,c,d,e)$ has complexity $n$. $M_n$ is a left ideal because, for each letter $\ell$ of $\Sig$, and each word $w \in \Sig^*$,  $w\in M_n$ implies $\ell w\in M_n$.
We prove all the claims of Theorem~\ref{thm:leftideals} except the claims in  Item 6.

\be
\item
{\bf Semigroup} The  transition semigroup is independent of the set of final states; hence it has the size of the DFA of the most complex left ideal.

\item 
{\bf Quotients} 
Obvious.

\item {\bf Reversal} 
The upper bound of $2^{n-1}+1$ was proved in~\cite{BJL13}, and it was shown in~\cite{BrTa14} that the number of atoms is the same as the complexity of the reverse.
Applying the standard NFA construction for reversal, we reverse every transition in DFA $\cE_n$ and interchange the final and initial states, yielding the NFA in Figure~\ref{fig:sclosedreversal}, where the initial states (unmarked) are $Q_n\setminus \{0\}$.

We perform the subset construction.
Set $Q_n\setminus \{0\}$ is initial.
From $\{q_1, \dots,  q_{k}\}$, $1\le q_1 \le q_k$,  we delete $q_i$, 
$q_1\le q_i \le q_k\le n-1$, by applying $a^{q_i} d a^{n-1-q_i}$. 
Thus all $2^{n-1}$ subsets of $Q_n\setminus \{0\}$ can be reached,
 and $Q_n$ is reached from the initial state $\{1\}$ by $e$.
For any distinct $S, T \subseteq Q_n$ with $q \in S \setminus T$,
either $q=0$, in which case $S$ is final and $T$ is non-final,
or $Sa^{q-1}e = Q_n$ and $Ta^{q-1}e = \emptyset$.
Hence all $2^{n-1}+1$ states are pairwise distinguishable.

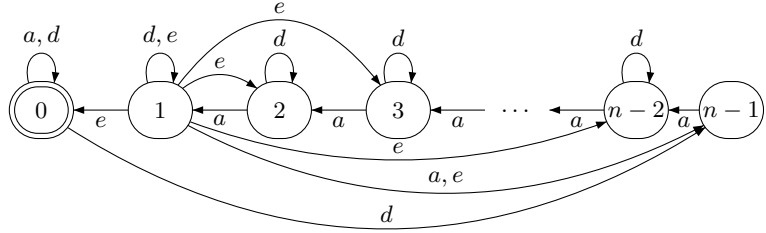
\begin{figure}[h]
\unitlength 9pt
\begin{center}\begin{picture}(33,9)(0,3.5)
\gasset{Nh=2.4,Nw=2.7,Nmr=1.2,ELdist=0.3,loopdiam=1.2}
{\small
\node(0)(2,8){$0$}\rmark(0)
\node(1)(7,8){$1$}
\node(2)(12,8){$2$}
\node(3)(17,8){$3$}
\node[Nframe=n](qdots)(22,8){$\dots$}

\node(n-2)(27,8){{\small $n-2$}}
\node(n-1)(31,8){{\small $n-1$}}
\drawedge(1,0){$e$}
\drawedge[ELdist=.2](2,1){$a$}
\drawedge(3,2){$a$}
\drawedge(qdots,3){$a$}
\drawedge(n-2,qdots){$a$}
\drawedge(n-1,n-2){$a$}
\drawloop(0){$a,d$}
\drawloop(1){$d,e$}
\drawloop(2){$d$}
\drawloop(3){$d$}
\drawloop(n-2){$d$}
\drawedge[curvedepth=1.5](1,2){$e$}
\drawedge[curvedepth=3.8](1,3){$e$}
\drawedge[curvedepth=-2.1](1,n-2){$e$}
\drawedge[curvedepth=-3.5](1,n-1){$a,e$}
\drawedge[curvedepth=-5](0,n-1){$d$}
}
\end{picture}\end{center}
\caption{NFA for reversal of $M_n(a,-,-,d,e)$.}
\label{fig:sclosedreversal}
\end{figure}

\item {\bf Atoms} 
The upper bounds   in Theorem~\ref{thm:leftideals} for left ideals were derived in~\cite{BrDa15}. The proof of~\cite{BrDa15} that these bounds are met applies also to our witness $M_n$.

\item {\bf Star}
The upper bound $n+1$ was proved in~\cite{BJL13}.
To construct an NFA recognizing $(M_n(a,-,-,d,e))^*$ we add a new initial state $0'$ which is also final and has the same transitions as the former initial state $0$. We then 
add an $\eps$-transition from each final state of $\cE(a,-,-,d,e)$ to the initial state $0'$.
The language recognized by the new NFA $\cN$ is 
$(M_n(a,-,-,d,e))^*$. The final state $\{0'\}$ in the subset construction for $\cN$ is distinguishable from every other final state, since it rejects $a$, whereas other final states accept it. 

\item 
{\bf Product} Not applicable.

\item {\bf Boolean Operations}
\be
\item{Restricted complexity:}
The upper bound of $mn$ is the same as for regular languages.
We show that $M'_m(a,b,-,d,e) \circ M_n(a,e,-,d,b)$ has complexity $mn$.
In the standard construction for boolean operations, we consider the direct product of $\mathcal{E}'_m(a,b,-,d,e)$ and $\mathcal{E}_n(a,e,-,d,b)$.
The set of final states of the direct product varies depending on the operation $\circ \in \{\cup, \oplus, \setminus, \cap\}$.

\begin{figure}[ht]
\unitlength 8.5pt
\begin{center}\begin{picture}(35,21)(0,-2)
\gasset{Nh=2.6,Nw=2.6,Nmr=1.2,ELdist=0.3,loopdiam=1.2}
	{\scriptsize
\node(0'0)(2,15){$0',0$}\imark(0'0)
\node(1'0)(2,10){$1',0$}\rmark(1'0)
\node(2'0)(2,5){$2',0$}\rmark(2'0)
\node(3'0)(2,0){$3',0$}\rmark(3'0)

\node(0'1)(10,15){$0',1$}\rmark(0'1)
\node(1'1)(10,10){$1',1$}
\node(2'1)(10,5){$2',1$}
\node(3'1)(10,0){$3',1$}

\node(0'2)(18,15){$0',2$}\rmark(0'2)
\node(1'2)(18,10){$1',2$}
\node(2'2)(18,5){$2',2$}
\node(3'2)(18,0){$3',2$}

\node(0'3)(26,15){$0',3$}\rmark(0'3)
\node(1'3)(26,10){$1',3$}
\node(2'3)(26,5){$2',3$}
\node(3'3)(26,0){$3',3$}

\node(0'4)(34,15){$0',4$}\rmark(0'4)
\node(1'4)(34,10){$1',4$}
\node(2'4)(34,5){$2',4$}
\node(3'4)(34,0){$3',4$}
}

\drawedge(0'0,0'1){$b$}
\drawedge(0'0,1'0){$e$}

\drawedge(1'0,2'0){$a$}
\drawedge(2'0,3'0){$a$}
\drawedge[curvedepth=2, ELpos=65](3'0,1'0){$a$}
\drawedge[curvedepth=3](3'0,0'0){$d$}

\drawedge[ELpos=35](1'0,2'1){$b$}
\drawedge[ELpos=35](2'0,1'1){$b$}
\drawedge(3'0,3'1){$b$}

\drawedge(0'1,0'2){$a$}
\drawedge(0'2,0'3){$a$}
\drawedge(0'3,0'4){$a$}
\drawedge[curvedepth=-2, ELside=r, ELpos=65](0'4,0'1){$a$}
\drawedge[curvedepth=-3, ELside=r](0'4,0'0){$d$}

\drawedge[ELpos=35, ELside=r](0'1,1'2){$e$}
\drawedge[ELpos=35](0'2,1'1){$e$}
\drawedge(0'3,1'3){$e$}
\drawedge(0'4,1'4){$e$}

\drawedge(1'1,2'2){$a$}
\drawedge(2'2,3'3){$a$}
\drawedge(3'3,1'4){$a$}
\drawedge(3'1,1'2){$a$}
\drawedge(1'2,2'3){$a$}
\drawedge(2'3,3'4){$a$}
\drawedge(2'1,3'2){$a$}
\drawedge(3'2,1'3){$a$}
\drawedge(1'3,2'4){$a$}
\end{picture}\end{center}
\caption{Direct product for $M'_4(a,b,-,d,e)\oplus M_5(a,e,-,d,b)$ shown partially.}
\label{fig:idealcross1}
\end{figure}
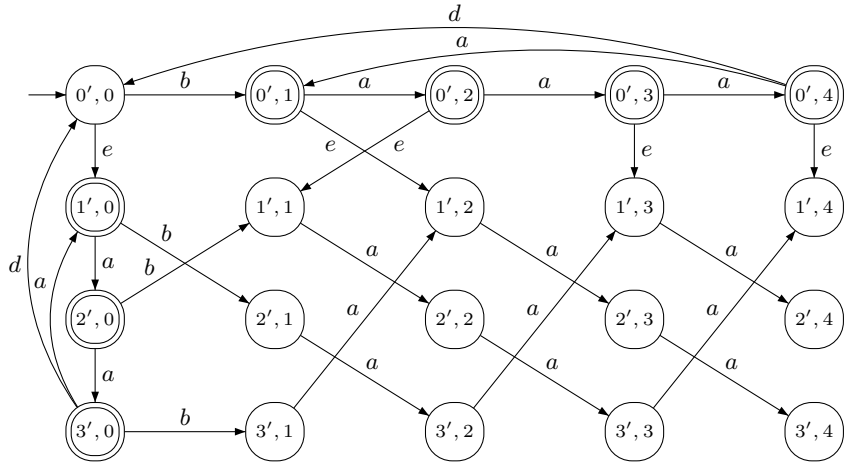

We first check that all $mn$ states are reachable in the direct product.
State $(0',0)$ is initial and $(p',0)$, $1\le p\le m-1$, is reached by $ea^{p-1}$.
If $1 \le q \le p$ then $(p', q)$ is reached from $((p-q+1)', 0)$ by $b^2a^{q-1}$.	
Similarly $(0', q)$ is reached by $ba^{q-1}$, and if $1 \le p \le q$ then $(p', q)$ is reached from $(0', q-p+1)$ by $e^2a^{p-1}$.
Hence all $mn$ states are reachable.

Let $R = \{ (0', q) \mid q \in Q_n\setminus\{0\}\}$, $C = \{ (p',0) \mid p' \in Q'_m\setminus\{0'\}\}$, and $S = \{(p',q) \mid p' \in Q'_m\setminus\{0\}, q \in Q_n\setminus\{0\}\}$.
States of $S$ are pairwise distinguished with respect to the set $R \cup C \cup \{(0',0)\}$ by words in $a^*bd$ if they differ in the first coordinate, or by words in $a^*ed$ if they differ in the second coordinate. We consider each operation separately to show that the states of $R \cup C \cup \{(0',0)\}$ are pairwise distinguishable and distinguishable from the states of $S$, with respect to the final states.

{\bf Union}
All states are final except $(0',0)$.
States of $R$ are distinguished by words in $a^*d$, as are states of $C$.
States of $R$ are distinguishable from those of $C \cup S$ because every state of $C \cup S$ accepts either $ba^{n-2}d$ or $aba^{n-2}d$, while states of $R$ are sent to $(0',0)$ by any word in $a^*ba^{n-2}d$. States of $C$ are similarly distinguishable from those of $R \cup S$.

{\bf Symmetric Difference}
The final states are those in $R \cup C$.
The argument is the same as union, except $(0',0)$ is distinguished from states of $S$ by $e$.

{\bf Difference}
The final states are those in $C$.
States of $C$ are distinguished by words in $a^*d$, and states of $R$ are distinguished by words in $a^*de$.
State $(0',0)$ is distinguished from states of $R$ by $e$.
States of $R \cup \{(0',0)\}$ are distinguishable from states of $S$ because every state of $S$ accepts a word in $\{a,b,d\}^*$, while those of $C \cup \{(0',0)\}$ accept only words with $e$.

{\bf Intersection}
The final states are those in $S$.
States of $R$ are distinguished by words in $a^*d$, as are states of $C$.
States of $R$ are distinguished from states of $C$ by $e$.
State $(0',0)$ is distinguished from states of $R$ by $e$ and from states of $C$ by $b$.

Hence $M'_m(a,b,-,d,e) \circ M_n(a,e,-,d,b)$ has complexity $mn$ for each $\circ \in \{\cup, \oplus, \setminus, \cap\}$.\\

\item{Unrestricted complexity:}
To produce a DFA recognizing $M'_m(a,b,c,d,e) \circ M_n(a,e,f,d,b)$, where $\circ$ is a boolean operation, we first add an empty state $\emptyset'$ to $\mathcal{E}'_m(a,b,c,d,e)$, and send all the transitions from any state of $Q'_m$ under $f$ to $\emptyset'$.
Similarly add an empty state $\emptyset$ to $\mathcal{E}_n(a,e,f,d,b)$ and send all the transitions from any state of $Q_n$ under $c$ to $\emptyset$.
Now the DFAs are over the combined alphabet $\{a,b,c,d,e,f\}$ and we take the direct product as before; the direct product for union is illustrated in Figure~\ref{fig:idealcross2}.
	
\begin{figure}[ht]
\unitlength 8.5pt
\begin{center}\begin{picture}(35,28)(0,-9)
\gasset{Nh=2.6,Nw=2.6,Nmr=1.2,ELdist=0.3,loopdiam=1.2}
	{\scriptsize
\node(0'0)(2,15){$0',0$}\imark(0'0)
\node(1'0)(2,10){$1',0$}\rmark(1'0)
\node(2'0)(2,5){$2',0$}\rmark(2'0)
\node(3'0)(2,0){$3',0$}\rmark(3'0)
\node(4'0)(2,-5){$\emp',0$}

\node(0'1)(10,15){$0',1$}\rmark(0'1)
\node(1'1)(10,10){$1',1$}\rmark(1'1)
\node(2'1)(10,5){$2',1$}\rmark(2'1)
\node(3'1)(10,0){$3',1$}\rmark(3'1)
\node(4'1)(10,-5){$\emp',1$}\rmark(4'1)

\node(0'2)(18,15){$0',2$}\rmark(0'2)
\node(1'2)(18,10){$1',2$}\rmark(1'2)
\node(2'2)(18,5){$2',2$}\rmark(2'2)
\node(3'2)(18,0){$3',2$}\rmark(3'2)
\node(4'2)(18,-5){$\emp',2$}\rmark(4'2)

\node(0'3)(26,15){$0',3$}\rmark(0'3)
\node(1'3)(26,10){$1',3$}\rmark(1'3)
\node(2'3)(26,5){$2',3$}\rmark(2'3)
\node(3'3)(26,0){$3',3$}\rmark(3'3)
\node(4'3)(26,-5){$\emp',3$}\rmark(4'3)

\node(0'4)(34,15){$0',\emp$}
\node(1'4)(34,10){$1',\emp$}\rmark(1'4)
\node(2'4)(34,5){$2',\emp$}\rmark(2'4)
\node(3'4)(34,0){$3',\emp$}\rmark(3'4)
\node(4'4)(34,-5){$\emp',\emp$}
	}

\drawedge(0'0,0'1){$b$}
\drawedge(0'0,1'0){$e$}

\drawedge(1'0,2'0){$a$}
\drawedge(2'0,3'0){$a$}
\drawedge[curvedepth=2, ELpos=65](3'0,1'0){$a$}
\drawedge[curvedepth=3](3'0,0'0){$d$}

\drawedge[ELpos=35](1'0,2'1){$b$}
\drawedge[ELpos=35, ELside=r](2'0,1'1){$b$}
\drawedge(3'0,3'1){$b$}

\drawedge(0'1,0'2){$a$}
\drawedge(0'2,0'3){$a$}
\drawedge[curvedepth=-2, ELside=r, ELpos=65](0'3,0'1){$a$}
\drawedge[curvedepth=-3, ELside=r](0'3,0'0){$d$}

\drawedge[ELpos=35, ELside=r](0'1,1'2){$e$}
\drawedge[ELpos=35](0'2,1'1){$e$}
\drawedge(0'3,1'3){$e$}

\drawedge(1'1,2'2){$a$}
\drawedge(2'2,3'3){$a$}
\drawedge(1'2,2'3){$a$}
\drawedge[curvedepth=-0.6](2'3,3'1){$a$}
\drawedge(2'1,3'2){$a$}
\drawedge[curvedepth=0.8](3'2,1'3){$a$}

\drawedge(0'3,0'4){$c$}
\drawedge(1'3,1'4){$c$}
\drawedge(2'3,2'4){$c$}
\drawedge(4'3,4'4){$c$}

\drawedge(3'0,4'0){$f$}
\drawedge(3'1,4'1){$f$}
\drawedge(3'2,4'2){$f$}
\drawedge(3'4,4'4){$f$}

\drawedge(4'0,4'1){$b$}
\drawedge(4'1,4'2){$a$}
\drawedge(4'2,4'3){$a$}
\drawedge[curvedepth=2, ELpos=65](4'3,4'1){$a$}
\drawedge[curvedepth=3](4'3,4'0){$d$}

\drawedge(0'4,1'4){$e$}
\drawedge(1'4,2'4){$a$}
\drawedge(2'4,3'4){$a$}
\drawedge[curvedepth=-2, ELside=r, ELpos=65](3'4,1'4){$a$}
\drawedge[curvedepth=-3, ELside=r](3'4,0'4){$d$}

\end{picture}\end{center}
\caption{Direct product for $M'_4(a,b,c,d,e)\cup M_4(a,e,f,d,b)$ shown partially.}
\label{fig:idealcross2}
\end{figure}
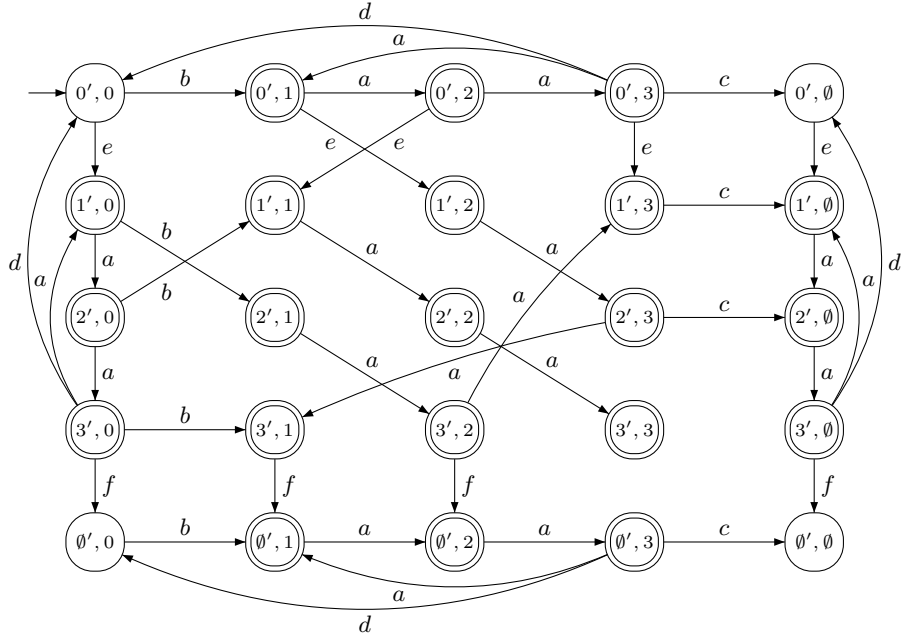
	
By the restricted case all the states of $Q'_m \times Q_n$ are reachable and distinguishable using words in $\{a,b,d,e\}^*$.
Let $R_{\emptyset'} = \{(\emptyset', q) \mid q \in Q_n\}$ and $C_{\emptyset} = \{(p', \emptyset) \mid p' \in Q'_m\}$.
States of $R_{\emptyset'} \cup C_{\emptyset} \cup \{ (\emp',\emp)\}$ are easily seen to be reachable using $c$ and $f$ in addition to $a$, $b$, $d$, and $e$.
We check that the states of $R_{\emptyset'} \cup C_{\emptyset} \cup \{ (\emp',\emp) \}$ are pairwise distinguishable and distinguishable from the states of $Q'_m \times Q_n$.

{\bf Union}
The final states of $R_{\emptyset'}$ are distinguished by words in $a^*d$, and those  of $C_{\emptyset}$ are similarly distinguishable.
All states except $\{ (\emp',\emp)\}$ are non-empty since each accepts a word in $\{a,b,d,e\}^*$.
States of $R_{\emptyset'} \cup \{(\emptyset', \emptyset)\}$ are distinguishable from all other states since every other state accepts $ce$.
Similarly, states of $C_{\emptyset} \cup \{(\emptyset', \emptyset)\}$ are distinguishable from all other states since every other state accepts $fb$.
Hence all $(m+1)(n+1)$ states are pairwise distinguishable.

{\bf Symmetric Difference}
Same as union.

{\bf Difference}
States of $R_{\emptyset'} \cup \{ (\emp',\emp)\}$ are empty and therefore equivalent.
However, since the alphabet of $M'_m(a,b,c,d,e) \setminus M_n(a,e,f,d,b)$ is $\{a,b,c,d,e\}$ we can omit $f$ and delete the states of $R_{\emptyset'} \cup \{(\emp',\emp)\}$, and be left with a DFA over $\{a,b,c,d,e\}$ that recognizes $M'_m(a,b,c,d,e) \setminus M_n(a,e,f,d,b)$.
States of $C_{\emptyset}$ are distinguished by words in $a^*d$, and
states of $Q'_m \times Q_n$ are distinguished from states of $C_{\emptyset}$ by $be$.
Hence the $mn+m$ remaining states are pairwise distinguishable.

{\bf Intersection} States of $R_{\emptyset'} \cup C_{\emptyset} \cup \{ (\emp',\emp)\}$ are empty and therefore equivalent.
However, since the alphabet of $M'_m(a,b,c,d,e) \cap M_n(a,e,f,d,b)$ is $\{a,b,d,e\}$, we can omit $c$ and $f$ and delete the states of $R_{\emptyset'} \cup C_{\emptyset} \cup \{(\emp',\emp)\}$, and be left with a DFA over $\{a,b,d,e\}$ that recognizes $M'_m(a,b,c,d,e) \cap M_n(a,e,f,d,b)$. By the restricted case, all $mn$ states are pairwise distinguishable.
\qed
\ee
\ee
\end{proof}

\section{Suffix-Closed Languages}
\label{sec:sclosed}

The  complexity of suffix-closed languages was studied in~\cite{BJZ14} in the restricted case, and the syntactic semigroup of these languages, in~\cite{BrSz14,BSY15,BrYe11}; however, most complex suffix-closed languages have not been examined.
\begin{definition}
\label{def:sclosed}
For $n\ge 4$, let $\cD_n=\cD_n(a,b,c,d,e)=(Q_n,\Sig,\delta_n, 0, \{0\})$, where 
$\Sig=\{a,b,c,d,e\}$,
and $\delta_n$ is defined by  transformations
$a\colon (1,\dots,n-1)$,
$b\colon(1,2)$,
${c\colon(n-1 \to 1)}$,
${d\colon(n-1\to 0)}$, 
$e\colon (Q_n\to 1)$.
Let $L_n=L_n(a,b,c,d,e)$ be the language accepted by~$\cD_n$; this language is the complement of the left ideal of Definition~\ref{def:LNew}.
The structure of  $\cD_n(a,b,c,d,e)$ is shown in Figure~\ref{fig:sclosed}. 
\end{definition}

\begin{figure}[h]
\unitlength 9.5pt
\begin{center}\begin{picture}(33,9)(-.5,3)
\gasset{Nh=2.4,Nw=2.7,Nmr=1.2,ELdist=0.3,loopdiam=1.2}
{\small
\node(0)(2,8){$0$}\imark(0)\rmark(0)
\node(1)(7,8){$1$}
\node(2)(12,8){$2$}
\node(3)(17,8){$3$}
\node[Nframe=n](qdots)(22,8){$\dots$}

\node(n-2)(27,8){{\small $n-2$}}
\node(n-1)(31,8){{\small $n-1$}}
\drawedge(0,1){$e$}
\drawedge[ELdist=.2](1,2){$a,b$}
\drawedge(2,3){$a$}
\drawedge(3,qdots){$a$}
\drawedge(qdots,n-2){$a$}
\drawedge(n-2,n-1){$a$}
\drawloop(0){$a,b,c,d$}
\drawloop(1){$c,d,e$}
\drawloop(2){$c,d$}
\drawloop(3){$b,c,d$}
\drawloop(n-2){$b,c,d$}
\drawloop(n-1){$b$}
\drawedge[curvedepth=-2.5](2,1){$b,e$}
\drawedge[curvedepth=-4.8](3,1){$e$}
\drawedge[curvedepth=2.1](n-2,1){$e$}
\drawedge[curvedepth=3.5](n-1,1){$a,c,e$}
\drawedge[curvedepth=5](n-1,0){$d$}
}
\end{picture}\end{center}
\caption{Minimal DFA $\cD_n(a,b,c,d,e)$  of Definition~\ref{def:sclosed}.}
\label{fig:sclosed}
\end{figure}
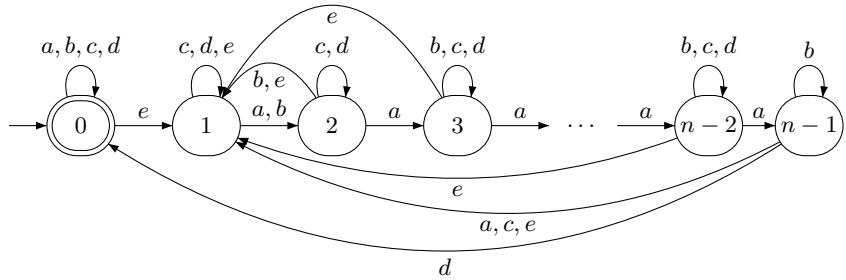

\begin{theorem}[Most Complex Suffix-Closed Languages]
\label{thm:sclosedmain}
For each $n\ge 4$, the DFA of Definition~\ref{def:sclosed} is minimal and its 
language $L_n(a,b,c,d,e)$ is suffix-closed and has complexity $n$.
The stream $(L_m(a,b,c,d,e) \mid m \ge 4)$  with some dialect streams
is most complex in the class of suffix-closed languages.
\begin{enumerate}
\item
The syntactic semigroup of $L_n(a,b,c,d,e)$ has cardinality $n^{n-1}+n-1$.  Moreover, fewer than five inputs do not suffice to meet this bound.
\item
All quotients of $L_n(a,-,-,d,e)$ have complexity $n$.
\item
The reverse of $L_n(a,-,-,d,e)$ has complexity $2^{n-1}+1$, and $L_n(a,-,-,d,e)$ has $2^{n-1}+1$ atoms. 
\item
For each atom $A_S$ of $L_n(a,b,c,d,e)$, the complexity $\kappa(A_S)$ satisfies:
\begin{equation*}
	\kappa(A_S) =
	\begin{cases}
		 n, 			& \text{if $S=\emp$;}\\
		2^{n-1},		& \text{if $S=Q_n$;}\\
		1 + \sum_{x=1}^{|S|}\sum_{y=1}^{n-|S|}\binom{n-1}{y}\binom{n-y-1}{x-1},
		 			& \text{if $\{0\} \subseteq S \subsetneq Q_n$.}
		\end{cases}
\end{equation*}

\item
The star of $L_n(a,-,-,d,e)$ has complexity $n$.
\item
\be
	\item
	Restricted complexity:
	$\kappa(L'_m(a,b,-,d,e) L_n(a,e,-,d,b)) = mn-n+1$.
	\item
	Unrestricted complexity:
	$\kappa(L'_m(a,b,c,d,e) L_n(a,e,f,d,b)) = mn+m+1$.
	\ee
\item
	\be
	\item
	Restricted complexity:
	$\kappa( L'_m(a,b,-,d,e) \circ L_n(a,e,-,d,b) ) = mn$ for $\circ \in \{\cup, \oplus, \cap, \setminus\}$.
	\item
	Unrestricted complexity:
	$\kappa( L'_m(a,b,c,d,e) \circ L_n(a,e,f,d,b) ) =
	( m+1)(n+1)$ if $\circ\in \{\cup,\oplus\}$, 
it is  $mn+m$ if $\circ=\setminus$, and $mn$ if $\circ=\cap$. 
	\ee
\end{enumerate}
\end{theorem}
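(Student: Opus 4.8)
The plan is to exploit the defining observation of Definition~\ref{def:sclosed}: the DFA $\cD_n(a,b,c,d,e)$ has exactly the transition structure of the left-ideal DFA $\cE_n$ of Definition~\ref{def:LNew}, differing only in that the single final state $\{0\}$ replaces the final set $\{1,\dots,n-1\}$. Hence $L_n(a,b,c,d,e)=\ol{M_n(a,b,c,d,e)}$, and since the complement of a left ideal is suffix-closed, $L_n$ is suffix-closed. Minimality of $\cD_n(a,-,-,d,e)$ is checked directly (each state reaches the final state $0$ by a word, and the states are pairwise distinguishable), so $L_n$ has complexity $n$; equivalently, complementation preserves the set of quotients and therefore $\kappa(L_n)=\kappa(M_n)=n$. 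Many items then transfer to $L_n$ through this complement relationship, and only the star and the product require genuinely new constructions.

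I would dispatch Items 1--4 and 7 by transfer. For Item 1, the transition semigroup is determined by the transitions alone and is identical to that of $\cE_n$, so it has cardinality $n^{n-1}+n-1$ by Theorem~\ref{thm:leftideals2}; the supplementary claim that fewer than five letters cannot realize this semigroup is a separate generator-lower-bound argument, which I would settle by the structure theory of the maximal suffix-closed/left-ideal semigroup (cf.~\cite{BrSz14,BSY15,BrYe11}). Item 2 is immediate because complementation preserves quotient complexities. For Item 3, $(\ol{M_n})^R=\ol{M_n^R}$ gives $\kappa(L_n^R)=\kappa(M_n^R)=2^{n-1}+1$, and since $L_n$ and $M_n$ have identical atoms (complementation only swaps which quotients are complemented, identifying the atom of $L_n$ indexed by $S$ with the atom of $M_n$ indexed by $Q_n\setminus S$), the atom count equals the reverse complexity by~\cite{BrTa14}; alternatively one performs the subset construction on the reversal NFA exactly as in the proof of Theorem~\ref{thm:leftideals2}. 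Item 4 follows from the same index correspondence $S\mapsto Q_n\setminus S$: substituting $|S|\mapsto n-|S|$ into the left-ideal formula of Theorem~\ref{thm:leftideals} and swapping the two summation indices reindexes the double sum into the stated $\binom{n-1}{y}\binom{n-y-1}{x-1}$ form and interchanges the roles of $\emp$ and $Q_n$. For Item 7, since the direct product of $\cE'_m$ and $\cE_n$ has reachability independent of the final states, I would reuse verbatim the reachability argument of Theorem~\ref{thm:leftideals2}, Item 7, and only re-examine distinguishability for the new final-state sets in each of the four operations and in both the restricted and unrestricted (combined-alphabet, empty-state) settings; De Morgan duality offers an independent check, each boolean combination of the suffix-closed dialects being the complement of a boolean combination of the corresponding left-ideal dialects.

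The star (Item 5) and the product (Item 6) are the new parts. For the star, I would cite the class upper bound $n$ and build the $\eps$-NFA for $(L_n(a,-,-,d,e))^*$ by adjoining a new initial final state $0'$ carrying the transitions of $0$, together with an $\eps$-transition from each final state back to $0'$; the subset construction then yields $n$ reachable, pairwise distinguishable states. The point explaining why the bound is $n$ rather than the left ideal's $n+1$ is that $0$ is already final in $\cD_n$, so $\eps\in L_n$ and the adjoined state $0'$ does not survive as a separate class. For the product I would take the standard construction---spawn into the initial state $0$ of $L_n$ from each final state of $L'_m$ and determinize---and prove reachability and distinguishability of exactly $mn-n+1$ states in the restricted case and $mn+m+1$ states in the unrestricted case over the combined alphabet with empty states, matching the class upper bounds (cf.~\cite{BJZ14}).

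I expect the product to be the main obstacle, for two reasons. First, unlike the left-ideal witness of Theorem~\ref{thm:leftideals2}, for which product was not applicable, here product must be handled from scratch. Second, the bounds are non-generic: since the initial state $0$ of $L_n$ is also final and is re-entered on every spawn, the subset component attached to each state of $L'_m$ is constrained, collapsing the generic count $mn$ by exactly $n-1$ in the restricted case; pinning down precisely which reachable configurations occur, and showing no two are equivalent, is the delicate step. The generator-minimality clause of Item 1 is the only other point that does not reduce to a transfer or a routine determinization, and I would treat it by the established structure theory of these maximal semigroups.
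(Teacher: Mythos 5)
Your proposal matches the paper's proof essentially step for step: $L_n=\ol{M_n}$ with the complement of a left ideal being suffix-closed, transfer of the semigroup, quotients, reversal, and atoms via the complement (including the atom correspondence $S\mapsto \ol{S}$ and reindexing of the left-ideal formula), De Morgan duality for the restricted boolean operations, direct products with adjoined empty states for the unrestricted ones, and a from-scratch subset-construction count of $mn-n+1$ (resp.\ $mn+m+1$) product states, whose collapse by $n-1$ you correctly attribute to the spawn always re-entering the final initial state $0$ of $L_n$. The only cosmetic divergence is star: the paper simply observes that the $\eps$-transition of the star construction is a self-loop at $0$ because the unique final state is the initial state, so $(L_n)^*=L_n$ outright, which is slightly cleaner than your auxiliary initial-final state $0'$ that you then argue does not survive as a separate equivalence class.
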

\begin{proof}
DFA $\cD_n(a,-,-,d,e)$ is minimal and $L_n(a,b,c,d,e)$ is suffix-closed since its complement is a left ideal. 
\be
\item
{\bf Semigroup}
The transition semigroup is independent of the set of final states; hence its size is the same as that of the transition semigroup of the DFA $\cE_n$ of the left ideal $M_n$.
\item
{\bf Quotients} Obvious.
\item {\bf Reversal} 
This follows from the results for $M_n$, since complementation commutes with reversal.

\item {\bf Atoms} 
We first establish an upper bound on the complexity of the atoms, using the corresponding bounds for left ideals.
Let $L$  be a suffix-closed language with quotients $K_0, \dots, K_{n-1}$; then $\overline{L}$ is a left ideal with quotients $\overline{K_0}, \dots, \overline{K_{n-1}}$.
For $S \subseteq Q_n$, the atom of $L$ corresponding to $S$ is
$A_S = \bigcap_{i \in S} K_i \cap \bigcap_{i \in \overline{S}} \overline{K_i}$.
This can be rewritten as $\bigcap_{i \in \overline{S}} \overline{K_i} \cap \bigcap_{i \in \overline{\overline{S}}} \overline{\overline{K_i}}$, which is the atom of $\overline{L}$ corresponding to $\overline{S}$; hence the sets of atoms of $L$ and $\overline{L}$ are the same.
The upper bounds now follow from those for left ideals as given in Theorem~\ref{thm:leftideals}, which were derived in~\cite{BrDa15}.
\item {\bf Star}
The upper bound $n$ was proved in~\cite{BJZ14}.
To construct an NFA recognizing $(L_{ n}(a,-,-,d,e))^*$ we add an $\eps$-transition from the final state of $\cD_{ n}(a,-,-,d,e)$ to the initial state $0$;
however in this case the $\eps$-transition is a loop at 0, which does not affect the language recognized by the automaton.
Thus $(L_{ n}(a,-,-,d,e))^* = L_n(a,-,-,d,e)$ and its complexity is $n$.

\item	{\bf Product}
	\be
	\item
	Restricted complexity:
	The upper bound $mn-n+1$ was derived in~\cite{BJZ14}.
	The NFA for the product $L'_m(a,b,-,d,e)L_n(a,e,-,d,b)$ is shown in Figure~\ref{fig:sclosedprod1} for $m=n=4$.

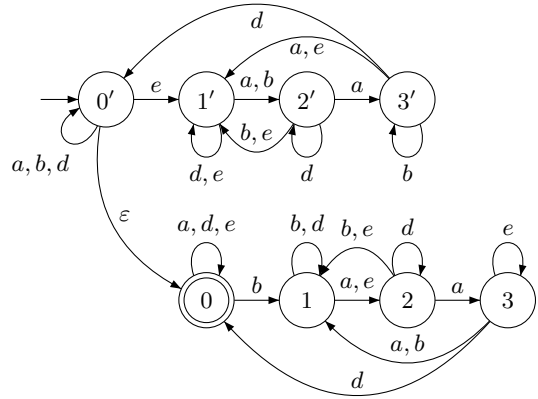
\begin{figure}[h]
\unitlength 9.5pt
\begin{center}\begin{picture}(31,16)(-2,0)
\gasset{Nh=2.2,Nw=2.2,Nmr=1.2,ELdist=0.3,loopdiam=1.2}
\node(0')(6,12){$0'$}\imark(0')
\node(1')(10,12){$1'$}
\node(2')(14,12){$2'$}
\node(3')(18,12){$3'$}

\drawedge(0',1'){$e$}
\drawedge(1',2'){$a,b$}
\drawedge(2',3'){$a$}
\drawloop[loopangle=225](0'){$a,b,d$}
\drawloop[loopangle=270](1'){$d,e$}
\drawloop[loopangle=270](2'){$d$}
\drawloop[loopangle=270](3'){$b$}
\drawedge[curvedepth=2.1, ELside=r](2',1'){$b,e$}
\drawedge[curvedepth=-2.5](3',1'){$a,e$}
\drawedge[curvedepth=-3.8](3',0'){$d$}

\node(0)(10,4){$0$}\rmark(0)
\node(1)(14,4){$1$}
\node(2)(18,4){$2$}
\node(3)(22,4){$3$}

\drawedge(0,1){$b$}
\drawedge(1,2){$a,e$}
\drawedge(2,3){$a$}
\drawloop(0){$a,d,e$}
\drawloop(1){$b,d$}
\drawloop(2){$d$}
\drawloop(3){$e$}
\drawedge[curvedepth=-2.1, ELside=r](2,1){$b,e$}
\drawedge[curvedepth=2.5, ELside=r](3,1){$a,b$}
\drawedge[curvedepth=3.8, ELside=r](3,0){$d$}

\drawedge[curvedepth=-2](0',0){$\eps$}
\end{picture}\end{center}
\caption{NFA for product of suffix-closed languages $L'_m(a,b,-,d,e)$ and $L_n(a,e,-,d,b)$.}
\label{fig:sclosedprod1}
\end{figure}

As $b \colon (1',2')(Q_n \to 1)$ is the only letter which does not fix $0$ and since $b$ maps $Q_n$ to 1,
the reachable sets in the subset construction are of the form $\{p', q\}$ or $\{p',0,q\}$ for $p' \in Q'_m$ and $q \in Q_n$.
However we cannot reach sets $\{0',q\}$ where $q \not= 0$, due to the $\eps$-transition from $0'$ to $0$.
Furthermore, the states $\{ \{p',0,q\} \mid q \in Q_n\}$ are equivalent as any word that maps $q$ to $0$ also fixes $0$.
Hence we consider only sets $\{p',q\}$ for $p' \in Q'_m \setminus \{0'\}$ and $q \in Q_n$, and the initial state $\{0',0\}$; note that there are $mn-n+1$ such sets.

For $p'\in Q'_m\setminus \{0'\}$ set $\{p',0\}$ is reached by $ea^{p-1}$, and $\{p',q\}$ is reached from $\{r',0\}$ by $ba^{q-1}$ for some $r' \in Q'_m \setminus \{0'\}$, since $ba^{q-1}$ induces a permutation on $Q'_m$.
Thus all $mn-n+1$ states are reachable.

Among these states, only $\{0',0\}$ is final. Non-final states $\{p'_1,q_1\}$ and $\{p'_2,q_2\}$ are distinguished by a word in $ea^*d$ if $q_1 \not= q_2$, or by a word in $ba^*d$ if $p'_1 \not= p'_2$. Thus they are pairwise distinguishable and the product has complexity $mn-n+1$.

	\item
	Unrestricted complexity:
The NFA for $L'_m(a,b,c,d,e)L_n(a,e,f,d,b)$ is the same as Figure~\ref{fig:sclosedprod1} except for the additional transformations $c \colon ((m-1)' \to 1')(Q_n \to \emptyset)$ and $f \colon (n-1 \to 1)(Q'_m \to \emptyset)$.
In addition to the $mn-n+1$ reachable and distinguishable states of the restricted case, $c$ and $f$ allow us to reach $\{p'\}$ for $p' \in Q'_m \setminus \{0'\}$, $\{q\}$ for $q \in Q_n$, and $\emptyset$.
State $\{p'\}$ is reached from the initial state by $eca^{p-1}$, $\{0\}$ is reached by $f$, and $\{q\}$ is reached by $fba^{q-1}$. The empty set is reached by $fc$.

The original $mn-n+1$ states are pairwise distinguishable as before.
States $\{p'\}$ for $p' \in Q'_m \setminus \{0'\}$ are pairwise distinguishable by words in $a^*d$,
as are states $\{q\}$ for $q \in Q_n$.
All states other than $\emptyset$ are non-empty, since they all accept $ea^{m-2}d$ or $ba^{n-2}d$.
All states $\{p'\}$ are distinguishable from states containing elements of $Q_n$, since $\{p'\}f = \emptyset$ while $\{q\}f \not= \emptyset$ for all $q \in Q_n$.
Similarly, all states $\{q\}$ are distinguishable from states containing elements of $Q'_m$.
Thus, all $mn+m+1$ states are pairwise distinguishable.
\ee
\smallskip

\item {\bf Boolean Operations}
\be
\item{Restricted complexity:}
Since $L'_m(a,b,-,d,e)$ ($L_n(a,e,-,d,b)$) is the complement of the left ideal $M'_m(a,b,-,d,e)$ ($M_n(a,e,-,d,b)$) of Definition~\ref{def:LNew} and they share a common alphabet $\{a,b,d,e\}$, by DeMorgan's laws we have
$\kappa(L'_m \cup L_n) = \kappa(M'_m \cap M_n)$,
$\kappa(L'_m \oplus L_n) = \kappa(M'_m \oplus M_n)$,
$\kappa(L'_m \setminus L_n) = \kappa(M_n \setminus M'_m)$,
and $\kappa(L'_m \cap L_n) = \kappa(M'_m \cup M_n)$.
Thus, by Theorem~\ref{thm:leftideals2}, all boolean operations have complexity $mn$.\\

\item{Unrestricted complexity:}
Following the example set in Theorem~\ref{thm:leftideals2}, we take the direct product for $L'_m(a,b,c,d,e) \circ L_n(a,e,f,d,b)$, as illustrated in Figure~\ref{fig:sclosedcross2} for $\circ = \cup$.
	
\begin{figure}[ht]
\unitlength 8.5pt
\begin{center}\begin{picture}(35,28)(0,-9)
\gasset{Nh=2.6,Nw=2.6,Nmr=1.2,ELdist=0.3,loopdiam=1.2}
	{\scriptsize
\node(0'0)(2,15){$0',0$}\imark(0'0)\rmark(0'0)
\node(1'0)(2,10){$1',0$}\rmark(1'0)
\node(2'0)(2,5){$2',0$}\rmark(2'0)
\node(3'0)(2,0){$3',0$}\rmark(3'0)
\node(4'0)(2,-5){$\emp',0$}\rmark(4'0)

\node(0'1)(10,15){$0',1$}\rmark(0'1)
\node(1'1)(10,10){$1',1$}
\node(2'1)(10,5){$2',1$}
\node(3'1)(10,0){$3',1$}
\node(4'1)(10,-5){$\emp',1$}

\node(0'2)(18,15){$0',2$}\rmark(0'2)
\node(1'2)(18,10){$1',2$}
\node(2'2)(18,5){$2',2$}
\node(3'2)(18,0){$3',2$}
\node(4'2)(18,-5){$\emp',2$}

\node(0'3)(26,15){$0',3$}\rmark(0'3)
\node(1'3)(26,10){$1',3$}
\node(2'3)(26,5){$2',3$}
\node(3'3)(26,0){$3',3$}
\node(4'3)(26,-5){$\emp',3$}

\node(0'4)(34,15){$0',\emp$}\rmark(0'4)
\node(1'4)(34,10){$1',\emp$}
\node(2'4)(34,5){$2',\emp$}
\node(3'4)(34,0){$3',\emp$}
\node(4'4)(34,-5){$\emp',\emp$}
	}

\drawedge(0'0,0'1){$b$}
\drawedge(0'0,1'0){$e$}

\drawedge(1'0,2'0){$a$}
\drawedge(2'0,3'0){$a$}
\drawedge[curvedepth=2, ELpos=65](3'0,1'0){$a$}
\drawedge[curvedepth=3](3'0,0'0){$d$}

\drawedge[ELpos=35](1'0,2'1){$b$}
\drawedge[ELpos=35, ELside=r](2'0,1'1){$b$}
\drawedge(3'0,3'1){$b$}

\drawedge(0'1,0'2){$a$}
\drawedge(0'2,0'3){$a$}
\drawedge[curvedepth=-2, ELside=r, ELpos=65](0'3,0'1){$a$}
\drawedge[curvedepth=-3, ELside=r](0'3,0'0){$d$}

\drawedge[ELpos=35, ELside=r](0'1,1'2){$e$}
\drawedge[ELpos=35](0'2,1'1){$e$}
\drawedge(0'3,1'3){$e$}

\drawedge(1'1,2'2){$a$}
\drawedge(2'2,3'3){$a$}
\drawedge(1'2,2'3){$a$}
\drawedge[curvedepth=-0.6](2'3,3'1){$a$}
\drawedge(2'1,3'2){$a$}
\drawedge[curvedepth=0.8](3'2,1'3){$a$}

\drawedge(0'3,0'4){$c$}
\drawedge(1'3,1'4){$c$}
\drawedge(2'3,2'4){$c$}
\drawedge(4'3,4'4){$c$}

\drawedge(3'0,4'0){$f$}
\drawedge(3'1,4'1){$f$}
\drawedge(3'2,4'2){$f$}
\drawedge(3'4,4'4){$f$}

\drawedge(4'0,4'1){$b$}
\drawedge(4'1,4'2){$a$}
\drawedge(4'2,4'3){$a$}
\drawedge[curvedepth=2, ELpos=65](4'3,4'1){$a$}
\drawedge[curvedepth=3](4'3,4'0){$d$}

\drawedge(0'4,1'4){$e$}
\drawedge(1'4,2'4){$a$}
\drawedge(2'4,3'4){$a$}
\drawedge[curvedepth=-2, ELside=r, ELpos=65](3'4,1'4){$a$}
\drawedge[curvedepth=-3, ELside=r](3'4,0'4){$d$}

\end{picture}\end{center}
\caption{Direct product for $L'_4(a,b,c,d,e)\cup L_4(a,e,f,d,b)$ shown partially.}
\label{fig:sclosedcross2}
\end{figure}
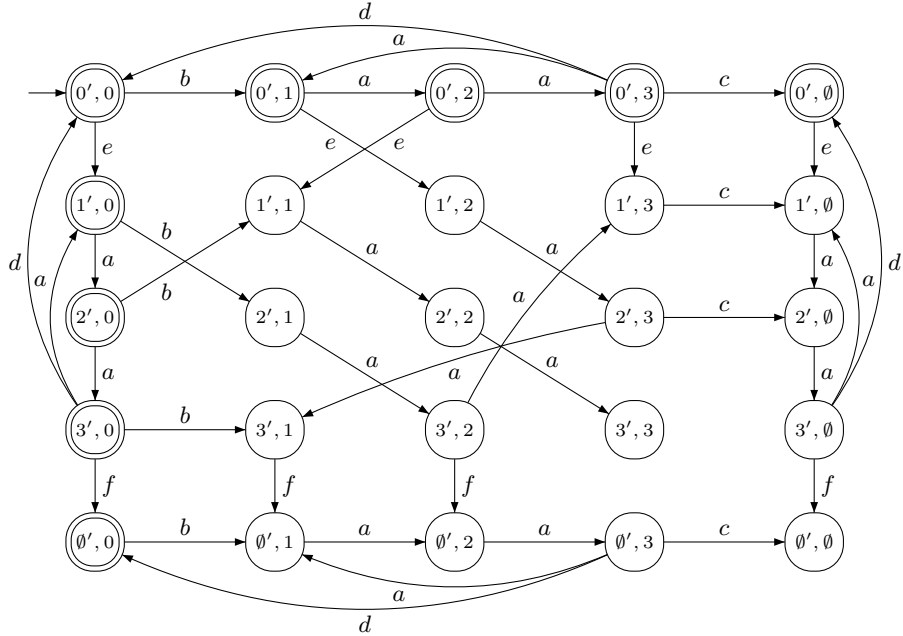
	
By the restricted case, the states of $Q'_m \times Q_n$ are reachable and distinguishable using words in $\{a,b,d,e\}^*$.
Let $R_{\emptyset'} = \{(\emptyset', q) \mid q \in Q_n\}$ and $C_{\emptyset} = \{(p', \emptyset) \mid p' \in Q'_m\}$.
States of $R_{\emptyset'} \cup C_{\emptyset} \cup \{ (\emp',\emp)\}$ are easily seen to be reachable using $c$ and $f$ in addition to $a$, $b$, $d$, and $e$.
We check that the states of $R_{\emptyset'} \cup C_{\emptyset} \cup \{ (\emp',\emp) \}$ are pairwise distinguishable and distinguishable from the states of $Q'_m \times Q_n$.

{\bf Union}
Non-final states of $R_{\emptyset'}$ are distinguished by words in $a^*d$, and those of $C_{\emptyset}$ are similarly distinguishable.
All states besides $\{ (\emp',\emp)\}$ are non-empty since each accepts a word in $\{a,b,d,e\}^*$.
States of $R_{\emptyset'} \cup \{(\emptyset', \emptyset)\}$ are distinguishable from all other states since every other state accepts a word in $ca^*d$;
states of $C_{\emptyset} \cup \{(\emptyset', \emptyset)\}$ are similarly distinguishable from all other states by words in $fa^*d$.
Hence all $(m+1)(n+1)$ states are pairwise distinguishable.

{\bf Symmetric Difference}
Same as union.

{\bf Difference}
States of $R_{\emptyset'} \cup \{ (\emp',\emp)\}$ are empty and therefore equivalent.
However, since the alphabet of $L'_m(a,b,c,d,e) \setminus L_n(a,e,f,d,b)$ is $\{a,b,c,d,e\}$ we can omit $f$ and delete the states of $R_{\emptyset'} \cup \{(\emp',\emp)\}$, and be left with a DFA over $\{a,b,c,d,e\}$ that recognizes $L'_m(a,b,c,d,e) \setminus L_n(a,e,f,d,b)$.
States of $C_{\emptyset}$ are distinguished by words in $a^*d$, and
states of $Q'_m \times Q_n$ are distinguished from states of $C_{\emptyset}$ by words in $a^*da^*d$.
Hence the $mn+m$ remaining states are pairwise distinguishable.

{\bf Intersection}
States of $R_{\emptyset'} \cup C_{\emptyset} \cup \{ (\emp',\emp)\}$ are empty and therefore equivalent.
However, since the alphabet of $L'_m(a,b,c,d,e) \cap L_n(a,e,f,d,b)$ is $\{a,b,d,e\}$, we can omit $c$ and $f$ and delete the states of $R_{\emptyset'} \cup C_{\emptyset} \cup \{(\emp',\emp)\}$, and be left with a DFA over $\{a,b,d,e\}$ that recognizes $L'_m(a,b,c,d,e) \cap L_n(a,e,f,d,b)$. By the restricted case, all $mn$ states are pairwise distinguishable.
\qed
\ee

\ee
\end{proof}

\section{Suffix-Free Languages}
The complexity of suffix-free languages was studied in detail in~\cite{BLY12,BrSz15a,BrSz15,CmJi12,HaSa09,JiOl09}.
For completeness we present a short summary of some of those results.
The main result of~\cite{BrSz15a,BrSz15} is a proof that \emph{a most complex suffix-free language does not exist.} 
Since every suffix-free language has an empty quotient, the restricted and unrestricted cases for binary operations coincide.  

For $n\ge 6$, the transition semigroup of the DFA defined below is the largest transition semigroup of a minimal DFA accepting a suffix-free language. 

\begin{definition}
\label{def:D6}
For $n\ge 4$, 
we define the DFA 
$\cD_n(a,b,c,d,e) =(Q_n,\Sig,\delta,0,F),$
where $Q_n=\{0,\ldots,n-1\}$, $\Sig=\{a,b,c,d,e\}$, 
 $\delta$ is defined by the transformations
$a\colon (0\to n-1) (1,\ldots,n-2)$,
$b\colon (0\to n-1) (1,2)$,
$c\colon (0\to n-1) (n-2\to 1)$,
$d\colon (\{0,1\}\to n-1)$,
$e\colon (Q\setminus \{0\} \to n-1)(0\to 1)$,
and $F=\{q\in Q_n\setminus \{0,n-1\} \mid q \text{ is odd}\}$.
For $n=4$, $a$ and $b$ coincide, and we can use $\Sig=\{b,c,d,e\}$.
Let the transition semigroup of $\cD_n$ be $\Wsf(n)$.
\end{definition}

The main result for this witness is the following theorem:
\begin{theorem}[Semigroup, Quotients, Reversal,  Atoms, Boolean Ops.]
\label{thm:witness6}
Consider DFA  $\cD_n(a,b,c,d,e)$ of Definition~\ref{def:D6}; its language 
$L_n(a,b,c,d,e)$ is a suffix-free language of complexity $n$.
Moreover, it meets the following bounds:
\be
\item
For $n\ge 6$, $L_n(a,b,c,d,e)$ meets the bound $(n-1)^{n-2}+n-2$ for syntactic complexity, and at least five letters are required to reach this bound.
\item
The quotients of $L_n(a,-,-,-,e)$ have complexity $n-1$, except for $L$ which has complexity $n$, and  the empty quotient which has complexity 1. 
\item
For $n\ge 4$, the reverse of $L_n(a,-,c,-,e)$ has complexity $2^{n-2}+1$, and $L_n(a,-,c,-,e)$ has $2^{n-2}+1$ atoms.
\item
Each atom $A_S$ of $L_n(a,b,c,d,e)$ has maximal complexity:
\begin{equation*}
\label{eq:number_of_atoms}
	\kappa(A_S) =
	\begin{cases}
		 2^{n-2}+1,		& \text{if $S=\emp$;} \\

		n, 			& \text{if $S=\{0\}$;}\\
		1 + \sum_{x=1}^{|S|}\sum_{y=0}^{n-2-|S|}\binom{n-2}{x}\binom{n-2-x}{y},
		 			& \emp\neq S\subseteq \{1,\ldots,n-2\}.
		\end{cases}
\end{equation*}

\item
For $n,m \ge 4$, the complexity of $L_m(a,b,-,d,e)\circ L_n(b,a,-,d,e)$ is $mn - (m+n-2)$  if $\circ \in \{\cup, \oplus\}$, 
$mn - (m+2n-4)$ if $\circ = \setminus$, and 
$mn - 2(m+n-3)$ if $\circ = \cap$.
\item
A language which has a subsemigroup of $\Wsf(n)$ as its syntactic semigroup cannot meet the bounds for star and product.
\ee
\end{theorem}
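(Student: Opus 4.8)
The plan is to first establish the two global assertions---that $L_n(a,b,c,d,e)$ is suffix-free of complexity $n$---and then take the six items in turn, leaning on standard machinery where possible. For suffix-freeness the clean observation is that every nonempty word accepted by $\cD_n$ must begin with $e$, since $a,b,c,d$ all send $0$ to the sink $n-1$; as $e$ maps every state of $Q_n\setminus\{0\}$ to $n-1$, no word of $L_n$ is accepted from any state $p\ne 0$, so $L_n\cap L^p=\emp$ for all reachable $p\ne 0$, which is exactly suffix-freeness. Minimality (hence complexity $n$) follows by reaching $1,\dots,n-2$ through $e$ and the powers of the $(n-2)$-cycle $a$, reaching $n-1$ by $a$ from $0$, and separating states with the odd-final-state pattern together with $a$ and $d$. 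Item~2 is then a direct computation on the restriction to $\{a,e\}$. Items~3 and~5 are routine: for reversal I reverse all edges of $\cD_n(a,-,c,-,e)$, determinise, and show that the $2^{n-2}$ subsets of $\{1,\dots,n-2\}$ plus one extra state are exactly the reachable and pairwise distinguishable ones; for the boolean operations I take the direct product of $\cD'_m(a,b,-,d,e)$ and $\cD_n(b,a,-,d,e)$, observe that the row and column through the two empty states collapse so as to produce the subtractions $m+n-2$, $m+2n-4$ and $2(m+n-3)$, and verify reachability and distinguishability of the survivors for each $\circ$.

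The first genuine obstacle is Item~1, the bound $(n-1)^{n-2}+n-2$. Everything rests on an explicit description of $\Wsf(n)$: it is the union of the $(n-1)^{n-2}$ transformations $t$ with $0t=n-1$, $(n-1)t=n-1$ and $\{1,\dots,n-2\}t\subseteq\{1,\dots,n-1\}$, together with the $n-2$ transformations that send $0$ to a single working state and every other state to $n-1$. Containment is checked generator by generator. For generation I argue in two stages: $a$ and $b$ restrict to an $(n-2)$-cycle and a transposition and so produce the full symmetric group on $\{1,\dots,n-2\}$, while $c$ supplies one merge of two working states and $d$ one reset of a working state to the sink; the classical fact that a symmetric group together with a single rank-lowering map generates the whole transformation monoid then yields all transformations of the first type, and the words $e,ea,ea^2,\dots$ realise the $n-2$ transformations of the second type. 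The minimality claim that four letters cannot suffice is handled by noting that each generator contributes an indispensable capability---the cycle, the transposition, the working-set merge, the working-to-sink reset, and the $e$-restart---and that a rank-and-parity count shows no four-letter set realises all five.

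Item~4 I would derive from the reversal and quotient structure already in hand. The number of atoms equals the reverse complexity $2^{n-2}+1$ by the correspondence of~\cite{BrTa14}, and the nonempty atoms are indexed by $S=\emp$, $S=\{0\}$ and $\emp\ne S\subseteq\{1,\dots,n-2\}$. For each such $S$ I build the quotient DFA of $A_S=\bigcap_{i\in S}L^i\cap\bigcap_{i\notin S}\ol{L^i}$, whose states are the consistent ``colourings'' of $Q_n$ reachable under the transition semigroup; because Item~1 shows that semigroup is as large as the suffix-free constraints permit, every colouring not excluded by those constraints is reached, giving the maximal counts $2^{n-2}+1$ and $n$ in the two extreme cases and the stated double sum in the generic case (where the summation simply counts the in-$S$ and out-of-$S$ coordinates).

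The main obstacle is the negative Item~6, which calls for a structural impossibility argument rather than a construction. I would again use the explicit form of $\Wsf(n)$, extracting its decisive feature: every element either fixes the sink $n-1$ and sends $0$ to $n-1$, or is an $e$-type map sending $0$ to one working state and the entire rest of $Q_n$ (the working set included) to the sink. Hence no element of $\Wsf(n)$ carries $0$ into the working set while leaving even one working state there. In the $\eps$-NFA for $L^*$ the restart reintroduces the initial state at every accepting subset; to approach the star bound $2^{n-2}+1$ one must, just after a restart, apply a letter moving the restarted initial state into the working set while retaining the subset already accumulated. A sink-fixing letter sends the restart to $n-1$ and loses it, whereas an $e$-type letter collapses the accumulated subset to the sink and keeps only the image of the restart; in either case the old subset and the new contribution cannot coexist, and a short induction confines the reachable accepting subsets to a family strictly smaller than $2^{n-2}+1$, so the star bound fails. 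The product bound is excluded by the identical dichotomy, with the $\eps$-transitions from the final states of $\cD'_m$ into $0$ of $\cD_n$ playing the role of the restart. The step I expect to be hardest is making this induction quantitative---turning the rigidity of the $e$-collapse into the precise numerical gap below $2^{n-2}+1$ and the analogous product bound---since that is where the qualitative obstruction must become a clean count.
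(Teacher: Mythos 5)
A preliminary point: this paper never actually proves Theorem~\ref{thm:witness6} --- the surrounding text announces ``a short summary'' of results from~\cite{BrSz15a,BrSz15}, and the theorem is stated without proof --- so your attempt can only be measured against those cited sources. In outline you reconstruct their architecture correctly: the explicit description of $\Wsf(n)$ as the $(n-1)^{n-2}$ transformations $t$ with $0t=(n-1)t=n-1$ and $\{1,\dots,n-2\}t\subseteq\{1,\dots,n-1\}$, together with the $n-2$ maps of $e$-type; its generation via $S_{n-2}$ (from $a,b$) plus the merge $c$, the sink-reset $d$, and the words $ea^i$; and, for Item~6, the dichotomy that every element of $\Wsf(n)$ either sends $0$ to the sink or collapses everything except $0$. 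These are exactly the levers of the cited proofs. Moreover, the step you single out as hardest is in fact the easiest: since every element of $\Wsf(n)$ is a function, subsets in the determinization of the star (or product) NFA never grow except by the single $\eps$-restart state, and by the dichotomy the next letter either kills that state (type-1) or collapses the accumulated set to a singleton (type-2); hence every reachable subset carries at most one working state, giving $O(n)$ (resp.\ $O(mn)$) states, far below $2^{n-2}+1$ and $(m-1)2^{n-2}+1$ once $n\ge 6$. No delicate quantitative induction is needed.

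Genuine gaps remain precisely where the real work of~\cite{BrSz15} sits. (i)~The five-letter lower bound in Item~1 is only asserted; ``a rank-and-parity count'' is a placeholder. What is needed: type-1 maps compose to type-1 maps, so some generator must be of $e$-type; the maximal-rank elements of the generated semigroup form exactly the group generated by the permutational generators alone, and $S_{n-2}$ is not cyclic, so at least two permutational generators are required; and among rank-lowering type-1 generators you need both a merge inside $\{1,\dots,n-2\}$ and a map sending a working state to $n-1$, since products of permutations, merges, and rank-two maps never send a working state to the sink while retaining working rank $n-3$ (and symmetrically) --- totalling five. (ii)~In Item~4 the inference ``the semigroup is as large as permitted, hence every admissible colouring is reached'' is not a theorem; the double sum requires an explicit reachability and distinguishability analysis of the intersections $\bigcap_{i\in X}K_i\cap\bigcap_{j\in Y}\ol{K_j}$, determining exactly which pairs $(X,Y)$ occur. (iii)~Your deferred ``direct computation'' for Item~2 would in fact fail for even $n\ge 6$: with the final states being the odd states of the $(n-2)$-cycle and $n-2$ even, $a^2$ preserves parity and $e$ is constant off $0$, so $L^q=L^{q+2}$ (for $n=6$ one gets $L^1=L^3=(aa)^*$); minimality of the $(a,e)$-dialect needs the transposition $b$ or odd $n$, so this item needs repair, not mere verification. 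Finally, in Item~5 the subtraction $m+n-2$ for $\cup$ and $\oplus$ is a reachability phenomenon (the initial row and column become unreachable after the first letter), not a collapse of empty rows and columns as you describe; only the further subtractions for $\setminus$ and $\cap$ come from merging empty states.
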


The DFA defined below has the largest transition semigroup when $n\in \{4,5\}$.
The transition semigroup of this DFA is $\Vsf(n)$, and at least $n$ letters are required to generate it.

\begin{definition}
\label{def:D5}
For $n\ge 4$,  
$\cD_n(a,b,c_1,\dots,c_{n-2}) =(Q_n,\Sig_n,\delta,0,\{n-2\}),$
where $Q_n=\{0,\ldots,n-1\}$, $\Sig_n=\{a,b,c_1,\dots,c_{n-2}\}$, 
 $\delta$ is given by 
$a\colon (0\to n-1) (1,\ldots,n-2)$,
$b\colon (0\to n-1) (1,2)$,
and  $c_p\colon (p\to n-1) (0\to p)$ for $1\le p\le n-2$.
\end{definition}

We now define a DFA based on  Definition~\ref{def:D5}, but with only three inputs.

\begin{definition}
\label{def:suffree}
For $n\ge 4$, 
define the DFA 
$\cD_n =(Q_n,\Sig,\delta,0,\{n-2\}),$
where $Q_n=\{0,\ldots,n-1\}$, $\Sig=\{a,b,c\}$, 
and $\delta$ is defined by 
$ a \colon (0 \to n-1) (1,\dots,n-2) $,
$ b \colon (0 \to n-1) (1, 2)$, 
$ c \colon  (1,n-1) (0 \to 1) $.
See Figure~\ref{fig:suffree}.
\end{definition}

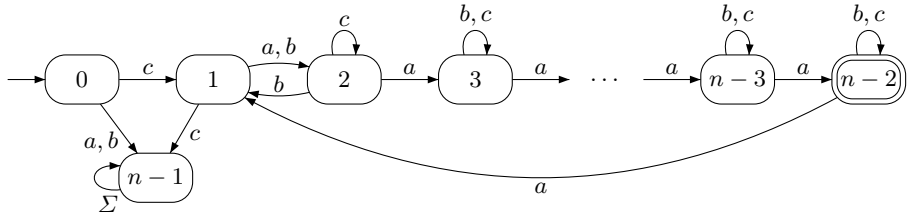
\begin{figure}[ht]
\unitlength 6.2pt
\begin{center}\begin{picture}(40,12)(5,-1)
\gasset{Nh=3,Nw=4.5,Nmr=1.25,ELdist=0.3,loopdiam=1.5}
\node(0)(1,7){0}\imark(0)
\node(1)(9,7){1}
\node(2)(17,7){2}
\node(3)(25,7){3}
\node[Nframe=n](3dots)(33,7){$\dots$}
\node(n-3)(41,7){$n-3$}
\node(n-2)(49,7){$n-2$}\rmark(n-2)
\node(n-1)(5.5,1){$n-1$}

\drawedge(0,1){$c$}
\drawedge[curvedepth=1](1,2){$a,b$}
\drawedge[curvedepth=1,ELdist=-1.25](2,1){$b$}
\drawloop(2){$c$}
\drawloop(3){$b,c$}
\drawloop(n-3){$b,c$}
\drawloop(n-2){$b,c$}
\drawedge(2,3){$a$}
\drawedge(3,3dots){$a$}
\drawedge(3dots,n-3){$a$}
\drawedge(n-3,n-2){$a$}
\drawedge[curvedepth=6,ELdist=.3](n-2,1){$a$}
\drawedge[ELdist= -2.6](0,n-1){$a,b$}
\drawedge(1,n-1){$c$}
\drawloop[loopangle=180
,ELpos=25](n-1){$\Sigma$}
\end{picture}\end{center}
\caption{Witness for star, product, and boolean operations.}
\label{fig:suffree}
\end{figure}

\begin{theorem}[Star, Product, Boolean Operations]
\label{thm:suffree}
Let $\cD_n(a,b,c)$ be the DFA of Definition~\ref{def:suffree}, and let the language it accepts be
$L_n(a,b,c)$. 
Then $L_m$ and its permutational dialects meet the bounds for star, product, and boolean operations as follows:
\be
\item
For $n\ge 4$, $(L_n(a,b,c))^*$ meets the bound $2^{n-2}+1$.
\item
For $m,n\ge 4$, $L'_m(a,b,c) L_n(c,a,b)$ meets the bound $(m-1)2^{n-2}+1$.
\item
For $m,n \ge 4$, but $(m,n)\not= (4,4)$, the complexity of $L'_m(a,b,c)\circ L_n(b,a,c)$ is $mn - (m+n-2)$  if $\circ\in \{\cup,\oplus\}$, 
 $mn - (m+2n-4)$ if $\circ=\setminus$, and 
$mn - 2(m+n-3)$ if $\circ=\cap$.
\ee
\end{theorem}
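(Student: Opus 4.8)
The theorem asserts that the three-letter suffix-free DFA of Definition~\ref{def:suffree} and its permutational dialects meet the bounds for star, product, and the three boolean operations. Since Definition~\ref{def:suffree} is a restriction of the five-letter (really $n$-letter) witness of Definition~\ref{def:D5} to three inputs $a,b,c$, the cleanest strategy is to verify that the transformations induced by $a$, $b$, and $c$ already generate enough of the semigroup $\Vsf(n)$ to realize all the subset-construction reachability and distinguishability arguments that establish each bound. I would treat the four quantities (star, product, and the boolean triple) as four separate reachability-plus-distinguishability computations on the relevant subset automaton or direct product, exactly in the style used for the left-ideal and suffix-closed proofs earlier in the paper.

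\textbf{Star.} For Item~1, I would form the $\eps$-NFA for $(L_n(a,b,c))^*$ by adding a fresh initial-and-final state $0'$ duplicating the transitions of $0$, together with $\eps$-transitions from the final state $n-2$ back to $0'$. After the subset construction, the reachable nonempty subsets live inside $\{1,\dots,n-2\}$ (since $0$ and $n-1$ behave like a sink/source pair in a suffix-free automaton), giving the $2^{n-2}$ count; adding the distinguishable new initial state $\{0'\}$ yields $2^{n-2}+1$. The key is to show that the single letter $a$ (an $(n-2)$-cycle on $\{1,\dots,n-2\}$) together with $c$ (which acts as a transposition $(1,n-1)$ and injects $0$) suffices to reach every subset of $\{1,\dots,n-2\}$; this is the place where the three-letter alphabet must do the work that the richer alphabet of Definition~\ref{def:D5} did, so I would exhibit explicit words $a^i c a^j$ that add or remove individual states from a reached set. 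Distinguishability then follows by routing a single target state to the final state $n-2$ via a power of $a$.

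\textbf{Product.} For Item~2, I would build the NFA for $L'_m(a,b,c)L_n(c,a,b)$ by joining the final state of the first DFA to the initial state of the second with an $\eps$-transition, then analyze the subset construction. Reachable states have the form $\{p\}$ or $\{p\}\cup T$ with $p$ a state of the first component and $T\subseteq$ the relevant $(n-2)$-element core of the second; the product of the $m-1$ nonempty non-sink first-component states with the $2^{n-2}$ core-subsets, plus one extra state, gives $(m-1)2^{n-2}+1$. The permutational relabeling $(c,a,b)$ on the second factor is exactly what keeps the copied letters acting as the needed cycle and transposition across the splice. Here again the crux is verifying full subset reachability in the second component using only the three (relabeled) letters, and then distinguishing pairs of reached sets by words that send exactly one component to an accepting configuration.

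\textbf{Boolean operations.} For Item~3, I would take the direct product of $\cD'_m(a,b,c)$ with the dialect $\cD_n(b,a,c)$ (swapping the roles of $a$ and $b$ in the second factor, as the paper does for the left-ideal and suffix-closed boolean bounds) and compute the number of reachable, pairwise-distinguishable states for each $\circ$. The arithmetic $mn-(m+n-2)$, $mn-(m+2n-4)$, and $mn-2(m+n-3)$ reflects the collapsing of the shared empty quotients and the $n-1$ sink states of each suffix-free factor into equivalence classes, so the counting must carefully identify which product states are unreachable and which are mutually indistinguishable (notably the rows/columns indexed by the empty and sink states). The excluded case $(m,n)=(4,4)$ is precisely where $a$ and $b$ coincide in the $n=4$ witness (as flagged after Definition~\ref{def:D6}), so the $a/b$-swap dialect degenerates and the generic counting breaks; I would simply note this coincidence as the reason for the exclusion.

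\textbf{Main obstacle.} The routine part is the distinguishability bookkeeping; the genuinely delicate step is establishing \emph{full subset reachability with only three letters}. In the five-letter witnesses each added letter contributes an independent elementary transformation, but here $a$, $b$, and $c$ must jointly generate enough of the symmetric-group action on the core states $\{1,\dots,n-2\}$ (plus the required singleton-injection and sink behavior) to reach all $2^{n-2}$ subsets. I expect the heart of the proof to be an explicit inductive argument—adding one core state at a time to a reached subset by a carefully chosen word in $a$, $b$, $c$—for both the star and the product constructions, after which the distinguishability arguments and the boolean counts follow by the same template already used earlier in the paper.
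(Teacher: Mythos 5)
Your overall architecture (an $\eps$-NFA for star looping back to the start, a spliced NFA for product, and a direct product with an $a/b$-swapped dialect for the boolean operations, each followed by reachability and distinguishability counts) matches the paper's proof. The genuine gap is in Item 3. You treat the reachability and pairwise distinguishability of the core states as routine bookkeeping following ``the same template already used earlier in the paper,'' but that template is unavailable here: within the cores $\{1,\dots,m-2\}$ and $\{1,\dots,n-2\}$ the only letters that stay inside the cores, $a$ and $b$, act as \emph{permutations} (a cycle and a transposition), and $c$ ejects state $1$; there are no resetting letters like $d$ and $e$ of the ideal/closed witnesses that let you adjust one coordinate of a product state while holding the other fixed. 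Proving that all $(m-2)(n-2)$ core pairs are reachable from $(1',1)$, and pairwise distinguishable for each $\circ$, when the two factors carry swapped generators of symmetric groups of degrees $m-2$ and $n-2$, is precisely the nontrivial group-theoretic content of this item, and the paper does not do it by hand: it invokes Theorem 1 of Bell, Brzozowski, Moreira, and Reis \cite{BBMR14} and supplements it with explicit machine computation for the sporadic cases $(m,n)\in\{(5,6),(6,5),(6,6)\}$. Your add-one-state-at-a-time induction, which is fine for the subset constructions of Items 1 and 2, has no analogue in this permutation-only direct product, so as written Item 3 would not go through. (Your diagnosis of the excluded case $(m,n)=(4,4)$ --- that $a$ and $b$ coincide there, so the swap degenerates --- is consistent with the paper.)

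Two smaller inaccuracies in Item 1 are worth flagging. First, the paper does not add a fresh initial state $0'$: since a minimal suffix-free DFA is non-returning (state $0$ has no incoming transitions), it simply makes $0$ final, adds the $\eps$-transition from $n-2$ to $0$, and deletes the empty state $n-1$. Second, your description of the reachable family as all subsets of $\{1,\dots,n-2\}$ ignores the $\eps$-closure constraint that any reachable set containing $n-2$ must also contain $0$; the correct family is $\{0\}$, the $2^{n-3}$ subsets $S\subseteq\{1,\dots,n-3\}$, and the $2^{n-3}$ sets of the form $\{0,n-2\}\cup S$. The total $2^{n-2}+1$ happens to coincide with your count, but the reachability and distinguishability arguments must be run on the correct family. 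Note also that the paper's reachability argument uses both $a$ and $b$ --- a cycle plus a transposition generating all permutations of $\{1,\dots,n-2\}$ --- rather than $a$ and $c$ alone, and then grows sets incrementally with words such as $a^{n-1-k}ca^{k-1}$.
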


\begin{proof}
The upper bounds for these operations were established in~\cite{BJZ14}.
\be
\item

{\bf Star}
We will prove that the DFA of Figure~\ref{fig:suffree} meets the bound $2^{n-2}+1$.
Since there are no incoming transitions to the initial state $0$, to obtain an NFA accepting $L_n^*$ it is sufficient to make state $0$ final, add an $\eps$-transition from state $n-2$ to state $0$, and delete state $n-1$.
We will show that in the subset construction the following states are reachable and pairwise distinguishable: $\{0\}$, any one of the $2^{n-3}$ subsets $S$ of $P=\{1,\dots,n-3\}$, and $2^{n-3}$ subsets of the form $\{0,n-2\}\cup S$, where $S\subseteq P$.

The initial state is $\{0\}$, the empty set is reached by $a$,  set $\{q\}$, $q\in P$ by $ca^{q-1}$, and $\{0, n-2\}$ by $ca^{n-3}$.
Notice that the set $\{1, 2, \dots, k\}$, where $2 \le k \le n-3$, is reached from $\{1, \dots, k-1\}$ by $a^{n-1-k}ca^{k-1}$.
It is well known that $a$ and $b$ generate all permutations of $\{1, \dots, n-2\}$; hence, for any $S \subseteq P$ there is a word $w \in \{a,b\}^*$ such that $\{1, 2, \dots, |S|\}w = S$.
Similarly a set $\{0, q_1, q_2, \dots, q_k, n-2\}$ with $k \le n-4$ is reached by a permutation from $\{1, 2, \dots, k+1\}$.
Finally, $\{0, 1, \dots, n-2\}$ is reached from $\{1, \dots, n-3\}$ by $ac$; hence every subset of $P$ is reachable, as are the sets $S \cup \{0, n-2\}$ for $S \subseteq P$.

Any pair of sets which differ by $q \in \{1, \dots, n-2\}$ is distinguished by $a^{n-2-q}$, and the empty set is distinguished from $\{0\}$ by $ca^{n-3}$. Thus, all $2^{n-2}+1$ sets are reachable and pairwise distinguishable.

\item
{\bf Product}
We construct an NFA for the product by deleting state $(m-1)'$, adding an $\eps$-transition from state $(m-2)'$ to state 0, deleting state $n-1$, and making state $n-2$ the only final state.
We will show that the following sets are all reachable and pairwise distinguishable:
$\{0'\}$, $\{p'\} \cup S$, $ 1 \le p < m-2$, $\{ (m-2)', 0 \} \cup S$, and $S$, where 
$S\subseteq Q_{n-1}\setminus \{0\}$.

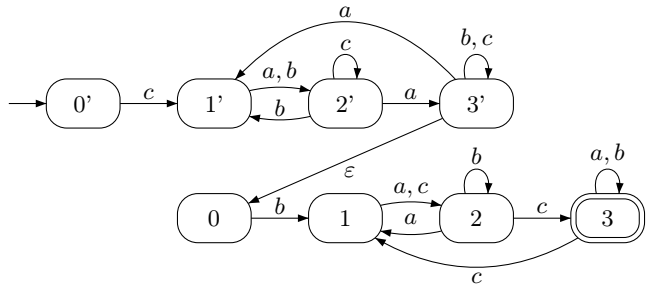
\begin{figure}[ht]
\unitlength 6.2pt
\begin{center}\begin{picture}(40,15)(0,3)
\gasset{Nh=3,Nw=4.5,Nmr=1.25,ELdist=0.3,loopdiam=1.5}
\node(0')(1,12){0'}\imark(0')
\node(1')(9,12){1'}
\node(2')(17,12){2'}
\node(3')(25,12){3'}

\node(0)(9,5){0}
\node(1)(17,5){1}
\node(2)(25,5){2}
\node(3)(33,5){3}\rmark(3)

\drawedge(0',1'){$c$}
\drawedge[curvedepth=1](1',2'){$a,b$}
\drawedge[curvedepth=1,ELdist=-1.25](2',1'){$b$}
\drawloop(2'){$c$}
\drawloop(3'){$b,c$}
\drawedge(2',3'){$a$}
\drawedge[curvedepth=-5,ELside=r](3',1'){$a$}

\drawedge(0,1){$b$}
\drawedge[curvedepth=1](1,2){$a,c$}
\drawedge[curvedepth=1,ELdist=-1.25](2,1){$a$}
\drawloop(2){$b$}
\drawloop(3){$a,b$}
\drawedge(2,3){$c$}
\drawedge[curvedepth=3](3,1){$c$}

\drawedge(3',0){$\eps$}

\end{picture}\end{center}
\caption{NFA for product $L'_{5}(a,b,c)L_{5}(c,a,b)$ of suffix-free languages.}
\label{fig:suffreeprod}
\end{figure}

State $\{0'\}$ is initial, $\{p'\}$ is reached by $ca^{p-1}$ for $1 \le p < m-2$, and $\{(m-2)',0\}$ is reached by $ca^{m-3}$. 
From $\{(m-2)', 0, q_2 - q_1, \dots, q_k - q_1\}$ we reach $\{(m-2)', 0, q_1, q_2, \dots, q_k\}$ by $cbc^{q_1 - 1}$;
hence $\{(m-2)', 0\} \cup S$ is reachable for any $S \subseteq Q_{n-1} \setminus \{0\}$.
Now for any $p' \in \{1, \dots, (m-3)'\}$, if $p$ is even, then $\{p'\} \cup S$ is reached from $\{(m-2)',0\} \cup S$ by $a^p$, and
if $p$ is odd, then $\{p'\} \cup S$ is reached from $\{(m-2)',0\} \cup (Sa)$ by $a^p$.
Finally, $S$ is reached from $\{1'\} \cup S$ by $c^{n-2}$.

Any two states which differ on some $q \in Q_{n-1} \setminus \{0\}$ are distinguished by $c^{n-2-q}$.
Two states that differ on $p' \in Q'_m \setminus \{0'\}$ are distinguished by $a^{m-2-p}bc^{n-3}$.
Finally, $\{0'\}$ is distinguishable from all other states because it is the only state that accepts $ca^{m-3}bc^{n-3}$.

\item
{\bf Boolean Operations}
We consider the direct product of $\cD'_m(a,b,c)$ and $\cD_n(b,a,c)$, where $m,n \ge 4$ and $(m,n) \not= (4,4)$, illustrated in Figure~\ref{fig:sfreecross}.
Let $S = \{(p',q) \mid p' \in Q'_{m-1} \setminus \{0'\}, q \in Q_{n-1} \setminus \{0\}\}$,
$R = \{((m-1)', q) \mid q \in Q_{n-1} \setminus \{0\}\}$,
and $C = \{(p', n-1) \mid p \in Q'_{m-1} \setminus \{0'\}\}$.

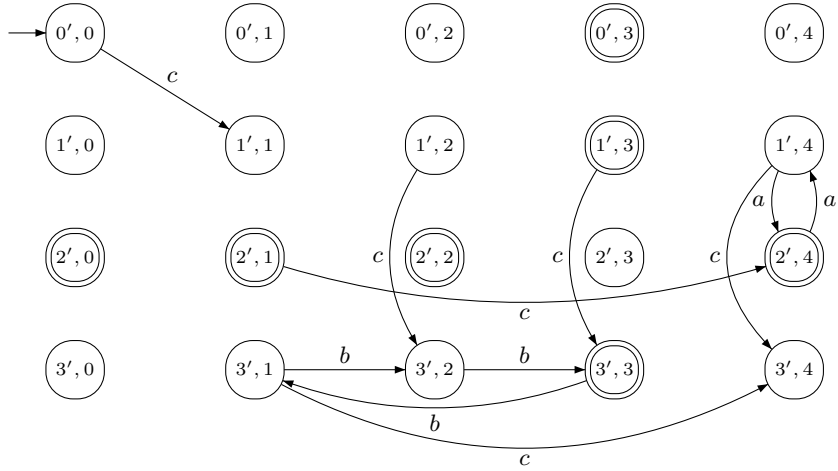
\begin{figure}[ht]
\unitlength 8.5pt
\begin{center}\begin{picture}(35,19)(0,-2.5)
\gasset{Nh=2.6,Nw=2.6,Nmr=1.2,ELdist=0.3,loopdiam=1.2}
	{\scriptsize
\node(0'0)(2,15){$0',0$}\imark(0'0)
\node(1'0)(2,10){$1',0$}
\node(2'0)(2,5){$2',0$}\rmark(2'0)
\node(3'0)(2,0){$3',0$}

\node(0'1)(10,15){$0',1$}
\node(1'1)(10,10){$1',1$}
\node(2'1)(10,5){$2',1$}\rmark(2'1)
\node(3'1)(10,0){$3',1$}

\node(0'2)(18,15){$0',2$}
\node(1'2)(18,10){$1',2$}
\node(2'2)(18,5){$2',2$}\rmark(2'2)
\node(3'2)(18,0){$3',2$}

\node(0'3)(26,15){$0',3$}\rmark(0'3)
\node(1'3)(26,10){$1',3$}\rmark(1'3)
\node(2'3)(26,5){$2',3$}
\node(3'3)(26,0){$3',3$}\rmark(3'3)

\node(0'4)(34,15){$0',4$}
\node(1'4)(34,10){$1',4$}
\node(2'4)(34,5){$2',4$}\rmark(2'4)
\node(3'4)(34,0){$3',4$}
	}

\drawedge(0'0,1'1){$c$}

\drawedge[curvedepth=-2, ELside=r](1'2,3'2){$c$}
\drawedge[curvedepth=-2, ELside=r](1'3,3'3){$c$}
\drawedge[curvedepth=-3, ELside=r](1'4,3'4){$c$}
\drawedge[curvedepth=-2, ELside=r](2'1,2'4){$c$}
\drawedge[curvedepth=-3.5, ELside=r](3'1,3'4){$c$}

\drawedge(3'1,3'2){$b$}
\drawedge(3'2,3'3){$b$}
\drawedge[curvedepth=1.7](3'3,3'1){$b$}
\drawedge[curvedepth=-1, ELside=r](1'4,2'4){$a$}
\drawedge[curvedepth=-1, ELside=r](2'4,1'4){$a$}
\end{picture}\end{center}
\caption{Direct product for $L'_4(a,b,c)\oplus L_5(b,a,c)$ shown partially.}
\label{fig:sfreecross}
\end{figure}

We first determine which states are reachable in the direct product.
State $(0',0)$ is initial and $(1',1)$ is reachable by $c$.
By \cite[Theorem 1]{BBMR14} and computation for the cases $(m,n) \in \{(5,6),(6,5),(6,6)\}$, all $(m-2)(n-2)$ states of $S$ are reachable from $(1',1)$ for all $m,n \ge 4$, $(m,n) \not= (4,4)$.
State $((m-1)', n-2)$ is reached from $(1', n-2)$ by $c$, and $((m-1)', q)$, $q \in Q_{n-1} \setminus \{0\}$, is reached from $((m-1)', n-2)$ by $b^{q}$; thus states of $R$ are reachable.
Similarly, state $((m-2)', n-1)$ is reached from $((m-2)', 1)$ by $c$, and $(p', n-1)$, $p \in Q'_{m-1} \setminus \{0'\}$, is reached from $((m-2)', n-1)$ by $a^{p}$; thus states of $C$ are reachable.
State $((m-1)',n-1)$ is reachable from $((m-1)',1)$ by $c$.
States $(p', 0)$ for $p' \not= 0'$ and $(0', q)$ for $q \not= 0$ are not reachable in the direct product, leaving $mn-(m+n-2)$ reachable states.

We check distinguishability for each operation.
In all cases, $(0', 0)$ is distinguishable from every other state because it is non-empty, and it goes to the empty state by $a$.
Again using \cite[Theorem 1]{BBMR14} and computation for the cases $(m,n) \in \{(5,6),(6,5),(6,6)\}$, the states of $S$ are pairwise distinguishable for all four operations.

{\bf Union} States of $R$ are distinguished from each other by words in $b^*$, and from states of $S$ by words in $\{a,b\}^*$.
Similarly, states of $C$ are distinguished from each other by words in $a^*$, and from states of $S$ by words in $\{a,b\}^*$.
States of $R$ are distinguished from those of $C$ by words in $a^*$.
Hence the $mn-(m+n-2)$ reachable states are pairwise distinguishable.

{\bf Symmetric Difference} Same as union.

{\bf Difference} The states of $C$ are all empty, and hence equivalent to $((m-1)', n-1)$.
States of $R$ are pairwise distinguishable by words in $a^*$, and they are distinguished from states of $S$ by words in $\{a,b\}^*$.
Hence there are $mn-(m+2n-4)$ distinguishable states.

{\bf Intersection}
The states of $R \cup C$ are all empty, and hence there are only $mn -2(m+n-3)$ distinguishable states.
\qed
\ee
\end{proof}

The transition semigroup of the DFA of Definition~\ref{def:D5s} is a also a subsemigroup of $\Vsf(n)$, and its language also meets the bounds for product, star and boolean operations. The advantage of this DFA is that its witnesses use only two letters for star
and only two letters (but three transformations) for boolean operations. Its disadvantages are the rather complex transformations.
For more details see~\cite{BrSz15}. The DFA of Definition~\ref{def:suffree} seems to us more natural.

\begin{definition}
\label{def:D5s}
For $n\ge 6$, 
we define the DFA 
$\cD_n =(Q_n,\Sig,\delta,0,\{1\}),$
where $Q_n=\{0,\ldots,n-1\}$, $\Sig=\{a,b,c\}$, 
and $\delta$ is defined by the transformations
$ a \colon (0 \to n-1) (1,2,3) (4,\dots,n-2) $,
$ b \colon (2 \to n-1) (1 \to 2) (0 \to 1) (3,4) $, 
$ c \colon (0 \to n-1) (1,\dots, n-2)  $.
\end{definition}

\section{Conclusions}
We have examined the complexity properties of left-ideal, suffix-closed, and suffix-free languages together because they are all special cases of suffix-convex languages.
We have used the same most complex regular language as a basic component in all three cases.

 Our results are summarized in Table~\ref{tab:summary}. The largest bounds are shown in boldface type.
Recall that for regular languages we have the following results: 
semigroup: $n^n$; 
reverse: $2^n$; 
star: $2^{n-1}+2^{n-2}$; 
restricted product:  $(m-1)2^n+2^{n-1}$;
unrestricted product: $m2^n+2^{n-1}$;  
restricted $\cup$ and $\oplus$: $mn$; 
unrestricted $\cup$ and $\oplus$: $(m+1)(n+1)$;
restricted $\setminus$: $mn$;
unrestricted $\setminus$: $mn+m$;
restricted $\cap$: $mn$; 
unrestricted $\cap$: $mn$. 
 \renewcommand{\arraystretch}{1.3}
\begin{table}[ht]
\caption{Complexities of special suffix-convex languages}
\label{tab:summary}

\footnotesize
\begin{center}
$
 \begin{array}{|c||c|c|c||}    
\hline
 & \ \text{Left-Ideal} \ & \ \text{Suffix-Closed}  \ &  \ \text{Suffix-Free}\     \\
\hline \hline
\ Semigroup \   	
 &\ \mathbf{n^{n-1}+n-1} \	&\mathbf{n^{n-1}+n-1} &(n-1)^{n-2} +n-2  \\
\hline
\ Reverse \   
 &\ \mathbf{2^{n-1}+1} \	& \mathbf{2^{n-1}+1} &\ 2^{n-2} +1 \ \\
\hline
\ Star \   
&\ n+1 \	& \ n \ &\ \mathbf{2^{n-2}+1} \   \\
\hline
\ Product\;  restricted\   &
\ m+n-1 \	& \ {mn-n+1} \ &\  \mathbf{(m-1)2^{n-2}+1} \  \\
\hline
\ Product\; unrestricted\   &
\ mn+m+n \	& \ mn+m+1 \ &\ \mathbf{(m-1)2^{n-2}+1} \    \\

\hline
\ \cup  \; restricted \   &	
\ \mathbf{mn} \	& \ \mathbf{mn} \ &\ mn-(m+n-2) \    \\
\hline
\ \cup \;   unrestricted \   &	
\ \mathbf{(m+1)(n+1)} \	& \ \mathbf{(m+1)(n+1)} \ &\ mn-(m+n-2) \    \\
 \hline
\ \oplus\;  restricted \   &	
\mathbf{mn} 	& \ \mathbf{mn} \ &\ mn-(m+n-2) \    \\
 \hline
\ \oplus\;   unrestricted \   &	
 \mathbf{(m+1)(n+1)} 	& \ \mathbf{(m+1)(n+1)} \ &\ mn-(m+n-2) \    \\
 \hline
\setminus\; restricted \   &	
\mathbf{mn} \	& \ \mathbf{mn} \ & mn-(m+2n-4)   \\
\hline
\setminus\; unrestricted \   &	
 \mathbf{mn+m} \	& \ \mathbf{mn+m} \ & mn-(m+2n-4)    \\
\hline
\cap\; restr. \text{ and } unrestr. \   &	
\ \mathbf{mn} \	&\ \mathbf{mn} \ & mn-2(m+n-3)    \\
\hline
\end{array} 
$
\end{center}
\end{table}

\providecommand{\noopsort}[1]{}

\end{document}